\numberwithin{equation}{section} 
\newcommand{\EE}{\mathbb{E}}
\newcommand{\Obar}{\bar{O}}
\newcommand{\eps}{\varepsilon}
\newcommand{\F}{\mathcal{F}}
\newcommand{\Fj}{\mathcal{F}}
\newcommand{\R}{\mathbb{R}}
\newcommand{\N}{\mathbb{N}}
\newcommand{\1}{\mathbb{1}}
\newcommand{\cupdot}{\mathbin{\mathaccent\cdot\cup}}
\newtheorem{thm}{Theorem} 
\newtheorem{lemma}{Lemma} 
\tikzset{every picture/.style=remember picture}
\tiny\color{gray},
\title{Causal inference for calibrated scaling interventions on time-to-event processes} 
\author{Helene C. Rytgaard \& Mark J. van der Laan}
\date{\textit{Technical report} \\[0.5cm] \small Last updated: \today}
\begin{document}

\maketitle

\abstract{ This work develops a flexible inferential framework for
  nonparametric causal inference in time-to-event settings, based on
  stochastic interventions defined through multiplicative scaling of
  the intensity governing an intermediate event process.  These
  interventions induce a family of estimands indexed by a scalar
  parameter $\alpha$, representing effects of modifying event rates
  while preserving the temporal and covariate-dependent structure of
  the observed data generating mechanism.  To enhance
  interpretability, we introduce calibrated interventions, where
  \(\alpha\) is chosen to achieve a pre-specified goal, such as a
  desired level of cumulative risk of the intermediate event, and
  define corresponding composite target parameters capturing the
  downstream effects on the outcome process.
  This yields clinically meaningful contrasts while avoiding
  unrealistic deterministic intervention regimes.
  Under a nonparametric model, we derive efficient influence curves
  for $\alpha$-indexed, calibrated, and composite target parameters
  and establish their double robustness properties.  We further sketch
  a targeted maximum likelihood estimation (TMLE) strategy that
  accommodates flexible, machine learning based nuisance estimation.
  The proposed framework applies broadly to (causal) questions
  involving time-to-event treatments or mediators and is illustrated
  through different examples event-history settings.
  A simulation study demonstrates finite-sample inferential
  properties, and highlights the implications of practical positivity
  violations when interventions extend beyond observed data support.
}

\vspace{0.2cm}

\textit{Keywords:} Event history analysis; stochastic interventions;
time-varying exposure; efficient estimation; right-censoring. 


\newpage

\begingroup
\setlength\parindent{0pt}
\setlength{\parskip}{0.4em}

\section{Introduction}
\label{sec:introduction}

In longitudinal and event history studies
\citep{andersen2012statistical}, a central aim is to understand how
time-to-event type processes, such as treatment initiation, disease
onset, or clinical interventions, affect outcomes like survival, death
due to specific causes, or disease progression.
Causal inference frameworks typically define interventions that are
static, assigning treatment uniformly (e.g., all subjects receive
treatment or control), dynamic, allowing treatment decisions to depend
on subject-specific characteristics, or stochastic, which more
generally assign treatment according to a user-specified probability
distribution \citep{diaz2013assessing,young2014identification,
haneuse2013estimation,van2007causal,hernan2006comparison,
chakraborty2013statistical,murphy2001marginal}.
In continuous-time settings, such interventions can be formalized
through modifications to the intensity functions governing treatment
or exposure processes \citep{roysland2011martingale,ryalen2020causal,
  rytgaard2021continuous,roysland2025graphical}. Static interventions
that prevent treatment initiation or discontinuation arise as special
cases, corresponding to `never treat' or `always treat' regimes.
However, medical decisions are guided by evolving clinical context and
are rarely applied homogeneously in practice.  As a result, static
interventions may be ill-defined or non-identifiable when treatment
eligibility varies over time, leading to severe violations of
positivity, and the reported estimands do not represent the effects of
clinical interest.
More broadly, researchers are often interested in describing how
time-to-event type treatments or other intermediate processes relate
to downstream outcomes, yet lack a practical and flexible intervention
framework for doing so in continuous-time settings.
In such settings, the goal is not always to prescribe individual-level
treatment rules, but to describe how systematic changes to
event-generating mechanisms would propagate to downstream outcomes.

In this work, we develop an inferential framework for nonparametric
causal inference in time-to-event settings, based on a natural class
of stochastic interventions which operate by scaling the intensity of
a counting process $N^z$, representing treatment initiation, disease
onset, surgery, or similar events, by a univariate scaling parameter
\(\alpha > 0 \). Rather than imposing deterministic or static rules,
these interventions proportionally change the instantaneous likelihood
of events, so post-intervention paths remain stochastic and respect
observed heterogeneity in clinical behavior.
This provides a flexible basis for analyzing realistic modifications
to time-to-event processes, with the intervention family interpolating
smoothly between observed practice and extreme regimes, and including
complete prevention as a limiting special case.
To enhance interpretability, we further introduce calibrated
interventions, in which $\alpha$ is chosen to achieve a pre-specified
goal, such as a desired level of cumulative risk of the intermediate
event, either fixed or relative to levels achieved under no
intervention. These calibrated interventions link the scaling
parameter \(\alpha\) to clinically meaningful targets and induce
composite causal parameters that capture downstream effects on the
outcome process, thereby linking flexible stochastic interventions to
interpretable causal questions.
We analyze the corresponding nonparametric estimation problems
targeting \(\alpha\)-indexed parameters, calibrated parameters, and
corresponding composite parameters.
We present the efficient influence curves for all target parameters
under a nonparametric statistical model, and derive and discuss their
double robustness properties. To make the theory operational, we
further sketch a targeted maximum likelihood estimator (TMLE) for the
fixed and calibrated targets that accommodate data-adaptive
machine-learning nuisance estimation.
The convenience of \(\alpha\)-scaling interventions becomes apparent
in several ways. First, estimation of weights for both inverse
probability weighting and targeted estimation greatly simplify:
because the scaled intensity is a multiplicative constant times the
original intensity, the likelihood ratio that defines weights reduces
to simple functions of cumulative event counts and cumulative
intensities, rather than ratios of intensities.  Furthermore, scaling
by any \(\alpha >0\) preserves the support of the original process, so
positivity holds for any \(\alpha>0\).  While extreme choices of
\(\alpha\) may in practice still induce large weights and numerical
instability, these practical issues are naturally accommodated within
the calibration framework, which provides a principled way to restrict
attention to plausible intervention regimes.

The intensity scaling interventions are connected to work on
incremental propensity score interventions
\citep{kennedy2019nonparametric}, as well as the causal mediation
framework based on parameter-indexed stochastic exposure interventions
proposed by \citet{diaz2020causal}.
However, our approach differs in several key ways. While
\cite{kennedy2019nonparametric} considers modification on the odds
ratio scale, we target the intensity scale directly, providing a more
natural specification in event history settings. Moreover, we
formulate and estimate the intervention effects in a more general data
setting, with right-censoring and competing risks and outcome events
allowed to happen in continuous or near continuous time.
We also emphasize that our proposed calibration framework is new,
linking intervention parameters to user-specified risk targets and
yielding composite parameters with a clear interpretation; this
calibration perspective could in turn be used to extend the methods of
\cite{kennedy2019nonparametric}.
%
%
%
Finally, we note that fixed-$\alpha$ intensity scaling coincides
mathematically with time-change representations that have appeared
previously in related continuous-time work \citep[][technical
report]{fawad2022hypothetical}, where counterfactual time changes were
used specifically to interpret treatment accelerations. Here we treat
stochastic intensity scaling itself as the primary intervention object
motivated by broader inferential goals, with interpretability provided
through calibration.
The nonparametric estimation framework developed here can nevertheless
be coupled with a time-change interpretation, thereby offering an
inferential extension of that perspective as well.

In summary, the contributions of this work are the following:
\begin{itemize}
\item[(i)] We formalize a class of stochastic intensity-scaling
  interventions for continuous-time event-history data and define
  corresponding \(\alpha\)-indexed target parameters.
\item[(ii)] We introduce calibrated interventions that select the
  scaling parameter \(\alpha\) to satisfy user-specified constraints
  on cumulative event risk, yielding interpretable composite target
  parameters, and offering a pragmatic analogue to indirect/direct
  effect decompositions.
\item[(iii)] Under a nonparametric model, we derive the efficient
  influence curves and establish efficiency theory for
  \(\alpha\)-indexed, calibrated, and composite targets.
\item[(iv)] We develop a targeted maximum likelihood estimation (TMLE)
  strategy that accommodates flexible, machine learning based nuisance
  estimation and avoids extreme weighting.
\end{itemize}
We further illustrate and motivate the proposed estimands through
several event-history examples drawn from real questions arising in
randomized trials and observational studies.

This document is structured as follows. Section \ref{sec:setting}
introduces the general setting and notation. Section
\ref{sec:interventions} presents our intervention framework: the
considered class of stochastic intensity interventions indexed by a
scalar parameter \(\alpha > 0\) and corresponding
intervention-specific (\(\alpha\)-indexed) target parameters,
calibrated interventions, calibration parameters and composite target
parameters.  Section \ref{sec:eic} studies the nonparametric
estimation problem and presents efficient influences curves. Section
\ref{sec:TMLE:estimation} sketches an estimation procedure based on
targeted maximum likelihood estimation. Section \ref{sec:sim:study}
presents a simulation study.  Section \ref{sec:discussion} concludes
with a discussion.

\section{General setting and notation} 
\label{sec:setting}

We consider an event history setting \citep{andersen2012statistical}
as follows. Suppose \(n\in\N\) subjects of a population are followed
over an interval of time \([0,\tau]\), each with observed data
characterized by a multivariate counting process \(N =
(N^{\ell},N^z,N^1,\ldots, N^J, N^c)\), with \(J\ge 1\), generating random times
\({T}_1 < T_2 < \cdots \) at which the disease or treatment status
(\(N^{\ell} ,N^z\)), an outcome of interest (\(N^1\)), competing risk
event status (\(N^2,\ldots, N^J\), if \(J\ge 2\)), and the censoring
status (\(N^c\)), may change. We shall focus on the case that the
outcome of interest \(N^1\) is an indicator of a particular disease to
happen, or an indicator of death due to a particular cause; extensions
to recurrent outcomes are possible but notationally heavier. We also
note that \(J=1\) if \(N^1\) is an indicator of all-cause mortality.
  Generally, \(N^z\) could represent a change in disease or treatment
  status, while \(N^{\ell}\) might capture, for example, disease
  status or a process indicating whether a biomarker has crossed a
  clinically relevant threshold. 
  Extensions to multiple such processes \(N^{\ell_1}, N^{\ell_2},...\)
  are straightforward, but simplified here for sake of presentation.
  We further consider baseline covariates \(L_0\in \R^d\) to be
  measured, and potentially a baseline treatment decision
  \(A_0\in\lbrace 0,1\rbrace\), to be made after observing
  \(L_0\). Note that \(A_0\) could for example indicate the
  randomization arm in a randomized controlled trial setting, or, if
  for example time zero is marked by a diagnosis in an observational
  study, \(A_0\) may represent the decision on a certain treatment
  option made following that diagnosis.
We collect the observed trajectory for one subject \(i\) in a bounded
interval \([0,t] \subseteq [0,\tau]\) as follows
\begin{align*}
  \bar{O}_i(t)  =\big( L_{0,i}, A_{0,i}, s,
  N_i^{\ell}(s), 
  N_i^z(s), N_i^1(s), \ldots, N_i^J(s) , N^c_i(s) \, :\, s \in\lbrace \tilde{T}_{i,r}\rbrace_{r=1}^{K_i(t)}\big), 
\end{align*}
with
\(K_i(t) = N^{\ell}_i(t) +N^{z}_i(t) + \sum_{j}^J N^{j}_i(t) +
N^c_i(t)\) denoting the number of events experienced at time \(t\).
We assume non-explosion on \([0,\tau]\), i.e., only finitely many
jumps on any compact interval.
We let \(T^d\) denote the survival time and \(T^c\) the censoring
time, so that we observe \( T^{\mathrm{end}} = \min (T^d, T^c)\) and
the indicator
\(\Delta = \1\lbrace T^d \le T^c\rbrace \sum_{j=1}^J j \cdot \1\lbrace
N^j(T^{\mathrm{end}}) = 1\rbrace \). We further let \(T^{\ell} ,T^z\)
denote the times of changes in disease/treatment status.
We thus have the collection of observed times
\((T_1, \ldots, T_{K })\) corresponding to the ordered version of the
event times \(( T^{\ell}, T^z, T^{\mathrm{end}})\), \(K:=K(\tau)\),
where it may be that \(T^{\ell}= \emptyset\) and/or
\(T^z = \emptyset\) for each particular subject. The observed counting
processes can also be represented as follows:
\( N^j(t) = \1\lbrace T^{\mathrm{end}}\le t, \Delta=j\rbrace
\in\lbrace 0,1 \rbrace, \,\, \text{for } j= 1,2,\ldots, J\),
\( N^c(t) = \1\lbrace T^{\mathrm{end}}\le t, \Delta=0\rbrace
\in\lbrace 0,1 \rbrace\),
\( N^{\ell}(t) = \1\lbrace T^{\ell} \le t, t \le T^{\mathrm{end}}
\rbrace \in \lbrace 0,1\rbrace\), and
\(N^{z}(t) = \1\lbrace T^{z} \le t, t \le T^{\mathrm{end}} \rbrace
\in \lbrace 0,1\rbrace\), with \( t\in [0,\tau]\), \(\tau >0\).
Note that we may equivalently write the observed data as
\begin{align*}
 \bar{O}_i(t) =(L_{0,i},A_{0,i},(T_{1,i},J_{1,i}),\ldots,(T_{K_i(t)},J_{K_i(t)})),
\end{align*}
where $J_k\in\mathcal E$ denotes the jump mark in the finite mark
space \(\mathcal{E}=\{1,\ldots,J,z,\ell,c\}\).
We let \(\F_{t} = \sigma( \bar{O}(t))\) denote the \(\sigma\)-algebra
generated by the observed data up until time \(t\), and 
let \(\mathcal{M}\) be the nonparametric model for the law of the
observed data \(O = \bar{O}(\tau)\) on the interval \([0,\tau]\).  For
any \(P\in\mathcal{M}\), we denote by
$\Lambda^x(t\mid\mathcal F_{t-})$ the predictable compensator for mark
\(x\), and, when absolute continuous, by
$\lambda^x(t\mid\mathcal F_{t-})$ its predictable intensity.
Each single predictable intensity may also be presented as a piecewise
(interval-specific) object, writing the intensity for mark $x$ in the
factorized form
\( {\lambda^x(t \mid \F_{t-})} = \sum_{k \ge 1} \1 \lbrace t \in
[T_{k-1}, T_k)\rbrace {\tilde{\lambda}^x_k (t \mid \Obar_{k-1} )}, \)
where, for each \(k\), $\tilde\lambda^x_k(\cdot\mid\Obar_{k-1})$ is a
predictable hazard function on $[T_{k-1},T_k)$ (the $k$-specific
hazard) and measurable with respect to the history $\Obar_{k-1}$. Note
that \(\tilde\lambda^x_k\) vanishes on intervals where mark \(x\) is
not admissible given \(\bar O_{k-1}\); for example
\(\tilde{\lambda}^z_k\) is only nonzero when no \(z\) event has
happened yet.

We further denote by \(\mu\) the density of the distribution of
baseline covariates \(L_0\in \R^d\) with respect to a dominating
measure \(\nu_L\), and by \(\pi\) the distribution of the baseline
treatment decision \(A_0 \in \lbrace 0,1\rbrace\). We denote by
\(P_0\in\mathcal M\) the true data-generating distribution and add the
subscript `0' to quantities under this law (e.g.,
\(\Lambda^x_0,\mu_0,\pi_0\)). The sample space for observed
paths that have exactly \(K\) jumps in \((0,\tau]\) is
\( \big(\mathbb
R^d\times\{0,1\}\times\{(t_1,j_1,\ldots,t_K,j_K):0<t_1<\cdots<t_K\le\tau,\
j_k\in\mathcal E\}\big)\). 
Each distribution \(P \in\mathcal{M}\) admits a general product
integral representation \citep{andersen2012statistical},
\begin{align}
  \begin{split}
   & dP (O) = \mu(L) d\nu_L(L) \pi(A \mid L)
    \Prodi_{s \le \tau}
\bigg( \prod_{x \in \mathcal{E}}
    \big({\Lambda}^x (dt \mid \F_{t-})\big)^{N^{x}(ds)}\bigg)  \bigg(1- \sum_{x \in \mathcal{E}} \Lambda^{x} (ds \mid \F_{s-}) \bigg)^{1-\sum_{x=1, \ldots, J,z,\ell,c} N^{x}(ds)}, 
  \end{split}\label{eq:factorize:P0}
\end{align}
with \(\prodi\) denoting the product integral \citep{gill1990survey}.
Note that the factor
\( \prodi_{s \le \tau} ( 1- \Lambda^\cdot (ds \mid
\F_{s-}))^{1-N^\cdot(ds)}\) evaluates to the exponential form
\( \exp (-\int_0^\tau \Lambda^\cdot (ds \mid \F_{s-}))\) when
\( \Lambda^\cdot\) is continuous.
Equivalently, when each compensator is absolutely continuous, the
density $p$ of the observed data under $P$ for a trajectory with
exactly $K$ jumps, with respect to the dominating measure
\(\nu_{L}\otimes\nu_A\otimes(\rho\otimes\nu_{\mathcal E})^K\), with
$\nu_A$ the counting measure on $\{0,1\}$, $\rho$ the Lebesgue measure
on $\mathbb{R}_+$, and $\nu_{\mathcal{E}}$ the counting measure on the
finite mark space $\mathcal{E} = \{1,\ldots,J,z,\ell,c\}$, can be
written on the form:
\begin{align*}
  \begin{split}
    p(o) &= \mu_0(\ell) \pi_0(a'\, \vert \, \ell)   \prod_{k=1}^K \bigg( \prod_{x\in \mathcal{E}}
           \big(    \tilde{\lambda}_{k}^x (t_k \mid \Obar_{k-1}) \big)^{\1 \lbrace j_k = x\rbrace}\bigg) 
 \exp \bigg( - \sum_{k=1}^{K+1}  \int_{t_{k-1}}^{t_k} \sum_{x\in \mathcal{E}}
           \tilde{\lambda}^x_{k}(t \mid \Obar_{k-1})dt \bigg)
           \bigg) 
           ,
  \end{split}
\end{align*}
with \(o = (\ell, a', t_{1}, j_{1}, \ldots, t_{K}, j_{K})\),
\(t_0:=0\) and \(t_{K+1} := \tau\).

\section{Intervention framework and target parameters}
\label{sec:interventions}

In this section we describe a class of stochastic interventions
tailored to the general event history setting described in Section
\ref{sec:setting}  where
intensity processes describe the rate at which certain events, such as
treatment initiation or disease onset, occur over time, conditional on
each subject’s observed history. These interventions specifically act
by multiplicative rescaling of the intensity \(\Lambda^z\) governing
the counting process \(N^z\) representing an intermediate
time-to-event process such as treatment initiation or disease onset.
The aim is twofold: to provide a flexible, interpretable family of
interventions indexed by a scalar \(\alpha>0\) and to define
calibrated/composite estimands that map policy-relevant targets (e.g.,
a desired cumulative incidence of the intermediate event) to an
\(\alpha\) level and then quantify the downstream impact on the
primary outcome. Below we first define the \(\alpha\)-scaling
interventions and the induced post-interventional law, and introduce
the primary and auxiliary target parameters. Then we present
calibration targets and composite parameters, discuss natural contrasts
and a direct/indirect-type decomposition, and state the main
interpretational and practical considerations. Further practical
discussion on how different causal contrasts can be formed and
interpreted within the proposed intervention framework can be found in
Supplementary Appendix B.

\subsection{\(\alpha\)-scaling stochastic interventions}
\label{sec:proposed:intervention}

We define the \(\alpha\)-scaling intervention which, for a rescaling
parameter \(\alpha > 0\), replaces intermediate process intensity
\(\Lambda^z\) as follows
\begin{align}
  \Lambda^z \mapsto \Lambda^{z,\alpha}= \alpha \Lambda^z,
  \label{eq:intensity:intervention}
\end{align}
while leaving the structural forms of other intensities unchanged.
The intervention alters the instantaneous rate at which
events occur without fixing event times or prescribing deterministic
decision rules; crucially, it preserves the dependence of
\(\Lambda^z\) on individual history, so heterogeneity in patient-level
risk remains intact under the intervention.
The intervention parameter \(\alpha > 0 \) controls the strength of
the hypothetical intervention, and can generally be interpreted
directly as a (conditional) hazard ratio. It includes the following
special cases, with \(\alpha = 1\) representing no intervention
(observed practice, or observed disease development), and
\(\alpha = 0\) representing complete prevention of the event
(equivalent to a ``censoring'' intervention).  Intermediate values
(e.g., \(\alpha = 0.5\)) represent proportional reductions in the
event rate, while values greater than one (e.g., \(\alpha = 1.5\))
correspond to increasing it.

\subsection{Post-interventional distribution}

We construct the post-interventional distribution $P^{a,\alpha}$
\citep[also known as the g-computation formula;][]{robins1986new} by
modifying the likelihood/path factorization \eqref{eq:factorize:P0} in
the following probabilistic manner:
\begin{enumerate}
\item
  Replace the treatment distribution \(\pi(a'\mid L_0)\) by the
  degenerate \(\delta_{a}(a')\) with all mass in
  \(a \in \lbrace 0,1\rbrace\) (i.e., enforce
  \(A_0=a \in \lbrace 0,1\rbrace\));
\item
  Remove right-censoring occurrence, by setting \(\Lambda^c = 0\); 
\item
  Replace \(\Lambda^z\) by \(\alpha\Lambda^z\) for \(\alpha > 0\).
\end{enumerate}
All other components (the law of baseline covariates $L_0$, the
functional forms linking histories to the other intensities) are kept
as under $P$.  We denote the resulting post-interventional
distribution by \(P^{a,\alpha}\).
In settings without a baseline treatment intervention,
interventions only involve no censoring and scaling \(\Lambda^z\) by
\(\alpha\Lambda^z\).
Note that \(\alpha = 1\) generally involves leaving \(\Lambda^z\) as
observed, so that, for example, \(P^{a,1}\) simply defines the
uncensored distribution under baseline treatment \(A_0=a\).

Since the intervention operates by multiplicatively scaling an
existing compensator, it preserves the support of the observed
data-generating process and does not introduce event types or
histories. As a result, the post-interventional distribution
\(P^{a,\alpha}\) is absolutely continuous with respect to \(P\),
ensuring that the positivity assumption holds by construction and that
the operation \(P \mapsto P^{a,\alpha}\) makes statistical sense and
gives rise to a well-defined measure over the same sample space.

\subsection{Intervention-specific target parameters}
\label{sec:target:parameter}

For fixed \(\alpha > 0\), a fixed time \(\tau >0\), and
\(a\in\lbrace 0,1\rbrace\), we define the intervention-specific target
parameter \(\Psi_1^{a,\alpha} \, : \, \mathcal{M} \rightarrow\R\) as
\begin{align}
  \Psi_1^{a,\alpha}(P)= \EE_{P^{{a,\alpha}}} [N^1(\tau)], 
  \label{eq:target:parameter}
\end{align}
and we similarly define
\( \Psi_1^{\alpha}(P)= \EE_{P^{{\alpha}}} [N^1(\tau)]\).
This mapping \(\alpha\mapsto \Psi_1^{a,\alpha}(P)\) defines a family
of intervention-specific parameters that quantify the expected outcome
under varying degrees of modification to the process \(N^z\). In
particular, contrasts such as
\( \Psi_1^{a,\alpha_1}(P) - \Psi_1^{a,\alpha_2}(P) \) represent the
difference in expected outcome under two alternative intervention
regimes, and may be interpreted as the effect of intensifying or
attenuating the rate of the event process \(N^z\) while fixing
treatment.

We further define the auxiliary parameter
\(\Psi_z^{a,\alpha} \, : \, \mathcal{M} \rightarrow\R\) capturing
the impact on the \(z\)-process itself
\begin{align}
  \Psi_z^{a,\alpha}(P)= \EE_{P^{{a,\alpha}}} [N^z(\tau)]. 
  \label{eq:target:parameter:z}
\end{align}
We similarly define
\(\Psi^{\alpha}_z(P)= \EE_{P^{{\alpha}}} [N^z(\tau)]\).  Thus, while
intervention with \(\alpha\) can be interpreted directly on the hazard
scale, the parameter in \eqref{eq:target:parameter:z} captures its
impact on the cumulative incidence of events of type \(z\), i.e., how
the intervention increases or decreases the probability of the event
of type \(z\) occurring over time. Notably, this parameter may be used
for calibrating \(\alpha\) to match a target level of incidence, or to
characterize trade-offs between intermediate and final outcomes; we
discuss this further in Section \ref{sec:impact:z}. First we state a
general result on the monotonicity and concavity of
\(\alpha \mapsto \Psi^{\alpha}_z(P)\).

\begin{lemma}
  The function \(\alpha \mapsto \Psi^{a,\alpha}_z(P)\) is increasing
  and concave in \(\alpha\), and strictly so when
  \(L^a(P):=P (\Lambda^z(\tau \mid \F^{a}_{\tau-})>0)>0\), where
  \(\F^{a}_{t}\) is the filtration generated by the observed data
  but evaluated in \(A_0=a\).  Moreover,
  \(\lim_{\alpha\rightarrow 0} \Psi^{a,\alpha}_z(P)=0\) and
  \(\lim_{\alpha\rightarrow\infty}\Psi^{a,\alpha}_z(P) = L^a(P) \);
  particularly
  \(\lim_{\alpha\rightarrow\infty}\Psi^{a,\alpha}_z(P) = 1 \) if and
  only if \(\Lambda^z(\tau \mid \F^{a}_{\tau-})>0\) almost surely.
  \label{lemma:shape:Psi:z:alpha}
\end{lemma}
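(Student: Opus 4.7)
The plan is to express $\Psi_z^{a,\alpha}(P)$ as $1$ minus the Laplace transform of a nonnegative random variable with respect to a reference measure that does not depend on $\alpha$, and then invoke the standard monotonicity and convexity properties of Laplace transforms. Because $\tilde\lambda^z_k \equiv 0$ once a $z$-event has occurred, $N^z$ is $\{0,1\}$-valued, so
\[ \Psi_z^{a,\alpha}(P) \;=\; 1 - P^{a,\alpha}(N^z(\tau)=0). \]
Let $P^*$ denote the post-intervention distribution constructed exactly as $P^{a,\alpha}$ but with $\alpha = 0$ (so $\Lambda^{z,\alpha} \equiv 0$); under $P^*$ no $z$-event occurs on $[0,\tau]$, yet the $z$-hazard remains a well-defined functional of the non-$z$ history, so I set
\[ X \;:=\; \int_0^{\tau\wedge T^d} \lambda^z(t\mid \F_{t-})\, dt \;\ge\; 0. \]

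Next I inspect the path-density \eqref{eq:likelihood_marked_interval} restricted to the event $\{N^z(\tau)=0\}$. The $z$-jump product is empty on both sides. Whenever $N^z(t-)=0$ the history $\Obar_{k-1}$ carries no $z$-jump, so the $k$-specific hazards $\tilde\lambda^x_k$ for $x\ne z$ take identical values under $p^{a,\alpha}$ and $p^*$ and their contributions cancel. The only remaining difference is the $z$-integral in the survival exponent, which contributes $\exp(-\alpha X)$ under $p^{a,\alpha}$ and is absent from $p^*$. Hence $p^{a,\alpha} = e^{-\alpha X}\, p^*$ on this event, and integrating gives
\[ P^{a,\alpha}(N^z(\tau)=0) \;=\; \EE^{P^*}[e^{-\alpha X}], \qquad \Psi_z^{a,\alpha}(P) \;=\; 1 - \EE^{P^*}[e^{-\alpha X}]. \]

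The map $\alpha \mapsto \EE^{P^*}[e^{-\alpha X}]$ is the Laplace transform of the nonnegative variable $X$, hence decreasing and convex, strictly so iff $P^*(X>0)>0$; this yields the monotonicity and concavity of $\Psi_z^{a,\alpha}(P)$, strict under the stated condition. The two limits then follow from dominated convergence: as $\alpha \downarrow 0$, $e^{-\alpha X} \uparrow 1$ and $\Psi_z^{a,\alpha}(P) \to 0$; as $\alpha \uparrow \infty$, $e^{-\alpha X} \downarrow \1\{X=0\}$ and $\Psi_z^{a,\alpha}(P) \to P^*(X>0)$, which equals $1$ iff $X>0$ $P^*$-almost surely.

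The main obstacle is the bookkeeping in the change-of-measure step: verifying cleanly that the non-$z$ factors in \eqref{eq:likelihood_marked_interval} cancel on $\{N^z(\tau)=0\}$, and identifying $P^*(X>0)$ with $L^a(P) = P(\Lambda^z(\tau\mid\F^{(a)}_{\tau-})>0)$ as stated in the lemma. Both points reduce to the observation that the non-$z$ intensities are measurable functionals of $\Obar_{k-1}$ that are unchanged by the intervention, so the law of the non-$z$ trajectory under $P^*$ matches the conditional law of that trajectory under $P^{a,\alpha}$ given $\{N^z(\tau)=0\}$; once this is in place the rest is routine.
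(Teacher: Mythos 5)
Your core representation is correct and does deliver the lemma: on the event $\{N^z(\tau)=0\}$ every history $\Obar_{k-1}$ is free of $z$-jumps and of censoring (which is switched off under both measures), so the non-$z$ factors of the path density coincide and the likelihood ratio of $P^{a,\alpha}$ to the $\alpha=0$ law $P^*$ reduces to the survival-exponent term $e^{-\alpha X}$ with $X=\int_0^{\tau\wedge T^d}\lambda^z(t\mid\F_{t-})\,dt$. Since $N^z(\tau)\in\{0,1\}$, this gives $\Psi_z^{a,\alpha}(P)=1-\EE^{P^*}[e^{-\alpha X}]$, and the monotonicity, concavity, strictness criterion and both limits follow from elementary Laplace-transform facts exactly as you say. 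This is essentially a repackaging of the direct computation $\tfrac{d}{d\alpha}\Psi_z^{a,\alpha}=\EE^{P^*}[Xe^{-\alpha X}]\ge 0$, $\tfrac{d^2}{d\alpha^2}\Psi_z^{a,\alpha}=-\EE^{P^*}[X^2e^{-\alpha X}]\le 0$, and is a legitimate route to all the stated claims.

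Two points in your closing paragraph need repair. First, the sentence asserting that ``the law of the non-$z$ trajectory under $P^*$ matches the conditional law of that trajectory under $P^{a,\alpha}$ given $\{N^z(\tau)=0\}$'' is false: your own change-of-measure identity shows that this conditional law is the $P^*$ law \emph{tilted} by $e^{-\alpha X}/\EE^{P^*}[e^{-\alpha X}]$, which differs from $P^*$ unless $X$ is degenerate. Fortunately nothing in the main computation uses this claim, so it should simply be deleted. Second, the identification $P^*(X>0)=L^a(P)=P(\Lambda^z(\tau\mid\F^{(a)}_{\tau-})>0)$ is not routine and is not justified by the (false) matching claim; it does require an argument, e.g.\ via the stopping time $\sigma:=\inf\{t:\int_0^t\lambda^z(s\mid\F_{s-})\,ds>0\}$: no $z$-event can occur before $\sigma$ under any scaling, so all the laws $P^{a,\alpha}$, $\alpha\ge 0$, and the uncensored observed law with $A_0=a$ agree on the pre-$\sigma$ trajectory, and both $\{X>0\}$ and $\{\Lambda^z(\tau\mid\F^{(a)}_{\tau-})>0\}$ coincide with $\{\sigma<\tau\wedge T^d\}$, an event determined by that stopped trajectory. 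With that substitution your proof is complete.
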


The proof is found in Supplementary
Appendix
A.
We refer to $L^a(P)$ as defined in Lemma \ref{lemma:shape:Psi:z:alpha}
is the \textit{maximal population fraction} that can ever experience a type-$z$
event under any finite multiplicative scaling of the $z$-hazard.
In particular, subjects with
$\Lambda^z(\tau\mid\F^a_{\tau-})=0$ cannot be made to experience a
$z$-event by any finite $\alpha$.
The condition $L^a(P)<1$ indicates that for some covariate strata the
$z$-hazard is degenerate at zero; this is analogous to a violation of
positivity/support for an intervention that would create $z$-events in
those strata. Small $L^a(P)$ means the scaling intervention has
limited capacity to change the population $z$ risk.

\subsection{Calibrated interventions and composite target parameters}
\label{sec:impact:z}

While the intervention parameter \(\alpha > 0 \) can be interpreted
directly on the hazard scale, and the corresponding parameter
\(\Psi_z^{\alpha}(P)\) informs us about the cumulative incidence of
type \(z\) events under this intervention, it is often more meaningful
to target a specific value of \(\alpha\) that achieves a pre-specified
goal. For example, we may wish to determine how much the \(z\)
intensity should be scaled (via \(\alpha\)) to achieve a specific
level \(\theta\in (0,1)\) for the cumulative risk of the intermediate
event process.

To that end, we define different versions of calibration parameters
\(\alpha \, :\, \mathcal{M}\rightarrow\R\) defined as solutions to a
functional equations involving one or more \(\alpha\)-indexed
parameters. Below we give four useful choices of calibration
parameters:
\begin{align*}
  \text{Calibration towards a fixed level: } \,\,
  &\alpha^{a,\theta}(P) \;:=\; \big(\Psi_z^{a,\cdot}(P)\big)^{-1}(\theta),\qquad
    \theta \in (0,1);\\
  \text{Calibration towards an absolute change: } \,\,
  &\alpha^{a,\delta}(P) \;:=\; \big(\Psi_z^{a,\cdot}(P)\big)^{-1}(\Psi_z^{a,1}(P)+ \delta),
    \quad \delta \in (-1,1);\\
  \text{Calibration towards a relative change: } \,\,
  &\alpha^{a,\rho}(P) \;:=\; \big(\Psi_z^{a,\cdot}(P)\big)^{-1}(\rho\Psi_z^{a,1}(P)),
    \quad \rho >0.
  \\
  \text{Group-matching calibration: } \,\,
  &\alpha^{1-a}(P) \;:=\; \big(\Psi_z^{a,\alpha}(P)\big)^{-1}(\Psi_z^{1-a,1}(P)), \quad
    a \in \lbrace 0,1\rbrace.
\end{align*}
In words, these different rules answer related but distinct questions
about how to `tune' the \(z\)-intensity under treatment \(a\):
\(\alpha^{a,\theta}(P)\) finds the multiplicative scaling that
produces a prespecified absolute risk \(\theta\) of type \(z\) events;
\(\alpha^{a,\delta}(P)\) finds the scaling that produces an
\emph{absolute change} \(\delta\) from the baseline level at
\(\alpha=1\); \(\alpha^{a,\rho}(P)\) finds the scaling that produces a
\emph{relative} change (multiplicative factor) \(\rho\) from the
baseline at \(\alpha=1\); and \(\alpha^{1-a}(P)\) finds the \(\alpha\)
that makes the \(z\)-risk in treatment group/arm \(a\) equal the
(observed) \(z\)-risk in the other treatment group/arm.
Existence of solutions is discussed in Section \ref{sec:identification:remarks} below.

Once \(\alpha(P)\) is defined, we can evaluate other \(\alpha\)-indexed
parameters at this value to define a composite parameter, such as:
\begin{align*}
\Psi_1 (P) = \Psi_1^{\alpha(P)}(P). 
\end{align*}
This quantity represents the risk of the outcome of interest under an
intervention that, for example, scales the \(z\) event intensity just
enough to achieve the target level \(\theta \in [0,1]\) of type \(z\)
events. Importantly, the intervention indexed by \(\alpha(P)\)
modifies the intensity in a way that, while counterfactual, remains as
close as possible to the observed data-generating dynamics.

We further highligt that calibration enables a natural decomposition
of a baseline treatment contrast into a component attributable to
changes in the \(z\)-process, corresponding to a calibration-based
`indirect' part, and a residual `direct' part. Concretely, with
\(\alpha^{1-a}(P)\) as defined above we may write
\begin{align}
  \Psi_1^{a,1}(P) - \Psi_1^{1-a,1}(P)
  &= \underbrace{\big(\Psi_1^{a,1}(P) - \Psi_1^{a,\alpha^{1-a}(P)}(P)\big)}_{\substack{\text{calibration-based}\\\text{indirect component}}}
  + \underbrace{\big(\Psi_1^{a,\alpha^{1-a}(P)}(P) - \Psi_1^{1-a,1}(P)\big)}_{\substack{\text{calibration-based}\\\text{direct component}}}.
  \label{eq:indirect:direct:decomp:alpha*}
\end{align}
The first term quantifies how much of the arm-\(a\) outcome risk would
be altered if its rate of type \(z\) events were rescaled to match the
other arm's observed \(z\)-risk; the second term quantifies the
remaining difference once that matching is imposed.
We emphasize that this calibration-based decomposition of the effect
of the baseline treatment intervention differs from classical
mediation, where the mediator distribution (here with type \(z\)
events representing the mediator) is shifted across exposure
arms. Here, the `shifting' is instead governed by the scaling
parameter \(\alpha\), representing a more pragmatic form of
intervention suited to longitudinal or continuous-time settings, where
conventional interventions on the mediator distribution may not be
realistic.

\subsection{Causal reading and feasibility of calibration}
\label{sec:identification:remarks}

Statistically, the parameters \(\Psi_z^{a,\alpha}(P)\) and
\(\Psi_1^{a,\alpha}(P)\) are well defined for any \(P\in\mathcal
M\). A causal interpretation, i.e., reading the estimand as the
expected outcome under an intervention that only modifies the
\(z\)-mechanism and baseline treatment/censoring as specified,
requires additional conditions of no unmeasured confounding of the
baseline treatment \(A_0\) given \(L_0\) (when we intervene on
\(A_0\)), and further that the history recorded in \(\F_{t-}\)
captures all common causes of the \(z\)-process and outcome dynamics
(so the law-level modification \(\Lambda^z\mapsto\alpha\Lambda^z\)
corresponds to a well-defined intervention).

Moreover, all calibration choices define a value of \(\alpha\) by
inversion of the curve \(\alpha\mapsto\Psi_z^{a,\alpha}(P)\). By Lemma
\ref{lemma:shape:Psi:z:alpha}, this map is increasing and concave in
\(\alpha\), and thus its image
\(\{\,\Psi_z^{a,\alpha}(P):\alpha >0\,\}\) is an interval of the form
\((0,L^a(P))\).  Thus a prescribed target level is achievable by a
finite \(\alpha\) if that level falls within \((0,L^a(P))\).
For example, for group-matching calibration, a prescribed target
$\Psi_z^{1-a,1}(P)$ is achievable by a finite $\alpha$ only if
$\Psi_z^{1-a,1}(P)\in(0,L^a(P))$. If $\Psi_z^{1-a,1}(P)\ge L^a(P)$, no
finite scaling of the hazard in arm $a$ will attain the risk level
$\Psi_z^{1-a,1}(P)$. If the other arm's observed risk exceeds
$L^a(P)$, then matching by hazard scaling alone is impossible.
This would be a substantive finding in itself, as it indicates that
achieving such a target would require interventions that fundamentally
alter the underlying decision or eligibility mechanisms, and are
therefore incompatible with what is ever observed in practice,
corresponding to a direct violation of positivity.

\subsection{Illustration of application}
\label{sec:illustration:application}

We here give two concrete real-life inspired simulated examples to
build intuition about how the intervention-specific parameters
\(\Psi^{a,\alpha}_1(P)\) and the auxiliary parameters
\(\Psi^{a,\alpha}_z(P)\) behave as functions of the scaling factor
\(\alpha\), and to show how calibrated choices of \(\alpha(P)\) map to
composite outcome risks. An additional example and full details on the
data-generating distributions considered are provided in the
Supplementary Appendix C. 

\subsubsection*{Example 1: effect of an operation} 

Consider an observational setting in which patients diagnosed with a
serious disease may receive a surgical operation at a random time
during follow up (as represented by the process $N^z$). In many such
settings postponing operations increases mortality: hence reducing the
operation rate is potentially harmful.  A static intervention that
forces every patient either to always receive or to never receive
surgery is typically implausible, because surgical eligibility depends
on evolving disease severity, comorbidities and clinical judgement;
for some individuals surgery is effectively inevitable once clinical
criteria are met. The proposed intensity-scaling intervention
preserves these eligibility mechanisms while modifying the overall
rate at which operations occur. In particular, individuals with
$\Lambda^z(\tau\mid\F_{\tau-})=0$ remain unaffected by any finite
scaling, a fact reflected by the upper bound $L(P)$ introduced in
Lemma~\ref{lemma:shape:Psi:z:alpha}.
Figure \ref{fig:operation:simulation:illustration:function:alpha}
shows the true population curves \(\alpha \mapsto \Psi^{\alpha}_1(P)\)
and \(\alpha \mapsto \Psi^{\alpha}_z(P)\). The slope of the outcome
curve indicates whether increasing the \(z\)-rate tends to raise or
lower the outcome risk: here \(\alpha \mapsto \Psi_1^{\alpha}(P)\) is
decreasing, so reducing the rate of operations (smaller \(\alpha\))
increases the risk of dying.  Comparing the curves
\(\alpha \mapsto \Psi^{\alpha}_1(P)\) and
\(\alpha \mapsto \Psi^{\alpha}_z(P)\) makes the trade-off between
frequency (how many operations) and outcome effect (how mortality
changes) explicit, which is useful for policy decisions that must
balance a modest improvement in outcome against a potentially large
change in the number of operations performed.
As an example of a calibration target, define \(\alpha^{\rho}(P)\)
through \( \Psi_z^{\alpha^\rho}(P) = \rho\,\Psi_z^{1}(P), \) with
\(\rho=0.6\). Thus \(\alpha^{\rho}(P)\) is the scaling that reduces
the cumulative incidence of operations to \(60\%\) of its observed
level. The associated composite outcome is
\( \Psi_1^{\rho}(P) \;=\; \Psi_1^{\alpha^{\rho}(P)}(P).  \) In the
illustrative setting used for Figure
\ref{fig:operation:simulation:illustration:function:alpha} we obtain
\(\Psi_1^{\alpha=1}(P)=0.257\) and
\(\Psi_1^{\alpha^{\rho}}(P)=0.289\). In relative terms this
corresponds to a risk which is approximately \(12\%\) larger than the
observed risk, interpreted as a 40\% reduction in the number of
operations leading to 12\% more deaths.

\begin{figure}[!ht]
  \centering
  \includegraphics[width=0.7\textwidth]{./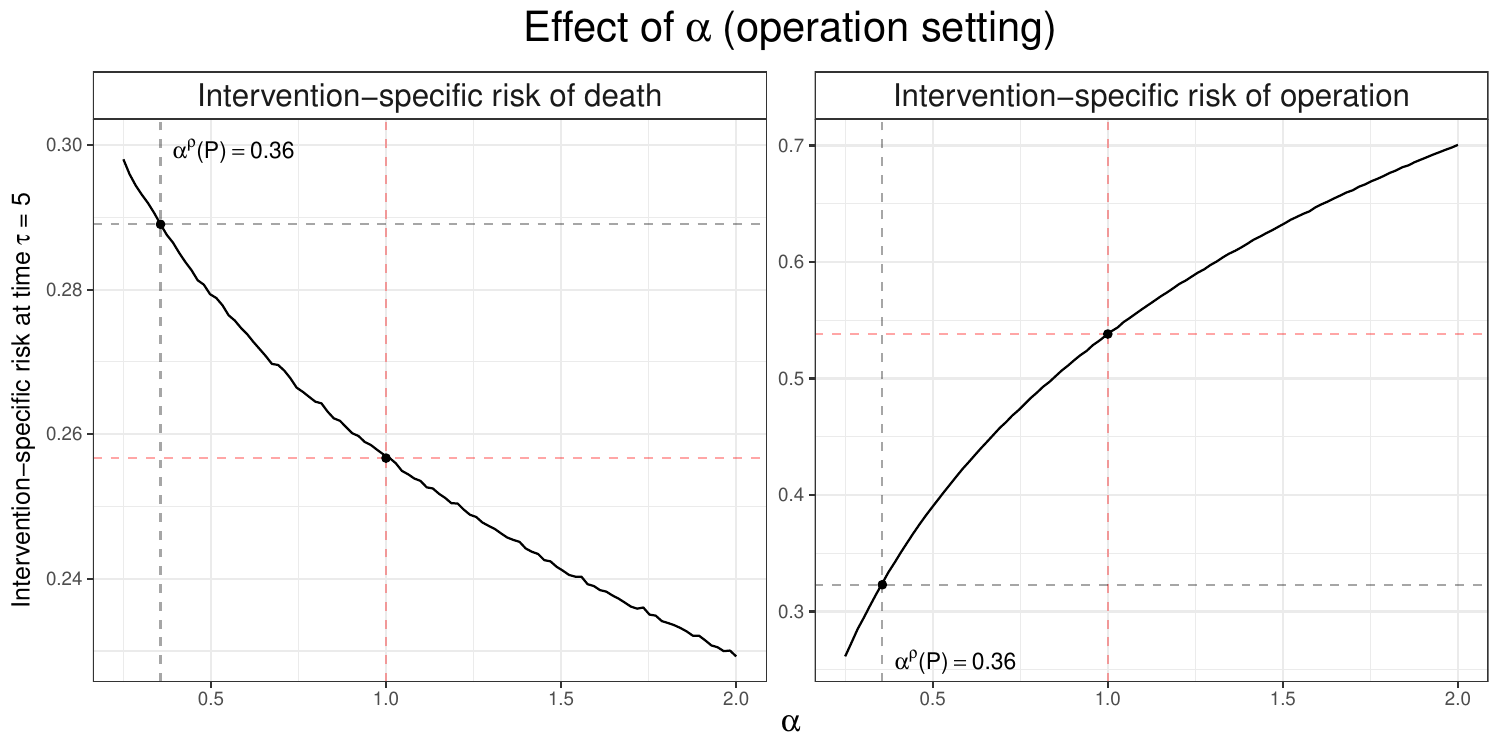}
  \caption{True values of the intervention-specific parameters for the
    operation example. The left plot shows
    \(\alpha \mapsto \Psi_1^{\alpha}(P)\); the right plot shows
    \(\alpha \mapsto \Psi_z^{\alpha}(P)\).  In the left plot, the
    difference between the curve and the horizontal red dashed line
    constitutes the contrast comparing the outcome risk when modifying
    the rate of operations by factor \(\alpha\) to the outcome risk
    with no modification of the rate of operation.  Comparing the
    curves on the left and the right plots reveals the trade-off
    between how frequently operations occur and their downstream
    impact on the outcome.  }
  \label{fig:operation:simulation:illustration:function:alpha}
\end{figure}

\subsubsection*{Example 2: Trial with rescue (drop-in) medication}
Consider a randomized trial in which subjects are allowed to initiate
a rescue medication (the $z$-event) during follow-up, but this
initiation mostly happens in the placebo arm. When initiation of
rescue medication can itself affect the outcome (e.g., mortality), a
key practical question is how much of the randomized treatment effect
operates through reductions in rescue initiation.
As in the previous example, forcing a homogeneous static rule
(everyone always/never initiates rescue) is typically unrealistic:
rescue initiation is driven by evolving biomarkers and symptoms (e.g.,
HbA1c in case of antidiabetic treatments) and eligibility depends on
patient-specific trajectories. The intensity-scaling intervention
preserves the clinical decision structure while proportionally
modifying the instantaneous hazard of rescue initiation. This allows
well-defined, interpretable intervention questions (e.g., reduce
rescue initiation by 30\%) without imposing clinically implausible
homogeneous rules.
To decompose the overall effect of randomized treatment, define
\(\alpha^{1}(P)\) as the scaling for the placebo arm \(a=0\) that
matches its \(z\)-risk to the observed treated-arm level. The
calibration-based decomposition is then
 \[
\Psi_1^{0,1}(P)-\Psi_1^{1,1}(P)
= \underbrace{\Psi_1^{0,1}(P) - \Psi_1^{0,\alpha^{1}(P)}(P)}_{\text{indirect via drop-in differences }}
+ \underbrace{\Psi_1^{0,\alpha^{1}(P)}(P) - \Psi_1^{1,1}(P)}_{\text{direct effect, balancing drop-in}}.
\]
The first term quantifies the portion of the treatment contrast
operating through changes in rescue initiation, and the second term is
the remaining direct
component. Figure~\ref{fig:dropin:simulation:illustration:function:alpha}
visualizes the curves \(\alpha \mapsto \Psi^{a,\alpha}_z(P)\) and
\(\alpha \mapsto \Psi^{a,\alpha}_1(P)\) for \(a=0,1\), as well as 
these segments. Of particular interest is here the direct effect
\(\Psi_1^{0,\alpha^{1}(P)}(P) - \Psi_1^{1,1}(P)\) reflecting the
effect of randomized treatment, had the overall initiation of rescue
mediation occurrence been matched between the treatment arms.

\begin{figure}[!ht]
\centering
 \begin{center}
    \includegraphics[width=0.7\textwidth,angle=0]{./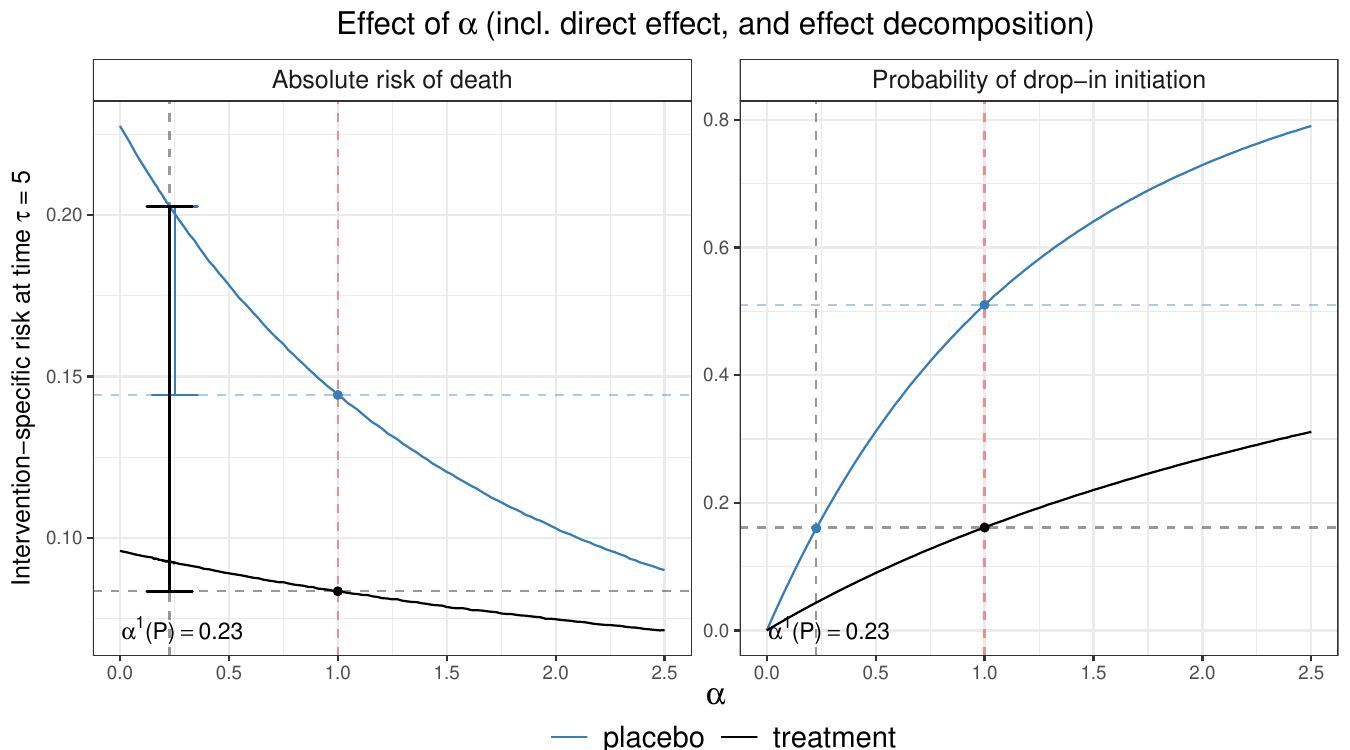}
  \end{center}
  \vspace{-0.5cm}
  \caption{True values of the intervention-specific parameters for the
    drop-in example. The left plot shows the intervention-specific
    absolute risks of death as function of \(\alpha\),
    \(\alpha \mapsto \Psi_1^{1,\alpha}(P)\) and
    \(\alpha \mapsto \Psi_1^{0,\alpha}(P)\); the right plot shows the
    intervention-specific absolute probability of drop-in initiation
    as function of \(\alpha\), \(\alpha \mapsto \Psi_z^{1,\alpha}(P)\)
    and \(\alpha \mapsto \Psi_z^{0,\alpha}(P)\). On the left plot the
    colored horizontal segments illustrate the indirect component
    $\Psi_1^{0,1}(P)-\Psi_1^{0,\alpha^{ 1}(P)}(P) = -0.0584$ (the
    length of the horizontal blue segment) and the direct component
    $\Psi_1^{0,\alpha^{ 1}(P)}(P)-\Psi_1^{1,1}(P) = 0.1192$ (the
    length of the horizontal black segment), which sum up to the total
    average treatment effect
    \(\Psi_1^{0,1}(P)-\Psi_1^{1,1}(P) = 0.0608\).}
   \label{fig:dropin:simulation:illustration:function:alpha}
\end{figure}

\section{Efficiency theory for estimation}
\label{sec:eic}

In this section we summarize the main theoretical ingredients that
underpin estimation and nonparametric inference for the target
parameters defined in Section~\ref{sec:interventions}.
We state the efficient influence curves for the \(\alpha\)-indexed
parameters
\(\Psi_x^{a,\alpha}(P) = \EE_{P^{{a,\alpha}}} [N^x(\tau)]\), for
\(x \in \lbrace 1,z\rbrace\), describe how these efficient influence
curves combine to yield efficient influence curves for the calibrated
and composite targets, and summarize the second-order remainders and
the resulting double robustness and rate requirements needed for
asymptotically linear estimation.  Full derivations and proofs are
given in the Supplementary Appendices D, F and E.

Since we will refer to several efficient influence curves, we briefly
explain our notation. We write
\( \phi^*_{\mathrm{functional}(), P}(O) \), where the first subscript
indicates the functional for which this is the efficient influence
curve, the argument \(P\in\mathcal{M}\) tells us the distribution at
which the curve is evaluated, and the superscript `\(*\)' signifies
that this is the efficient influence curve. Supplementary Material D
provides derivations of all presented efficient influence curves.

We begin by defining key components (clever covariates and weights)
used in the expression of the efficient influence curve for the
\(\alpha\)-indexed parameters. 
Clever covariates are defined as follows
(\(x' = 1, \ldots, J, \ell,z\))
\begin{align}
  \begin{split}
    h^{x,x'}_t (\lbrace\Lambda^j\rbrace_{j=1}^J,\Lambda^{\ell}, \Lambda^{z,\alpha}) (O)
    &=  \sum_{k=1}^K \1 \lbrace T_{k-1} < t \le T_k \rbrace \Big(
      \EE_{P^{{a,\alpha}}} \big[ N^x (\tau) \mid T_k = t, J_k = x',
      \F_{T_{k-1}}] \\[-0.3cm]
    &\qquad\qquad\qquad\qquad\qquad\qquad\qquad\qquad\qquad -\, \EE_{P^{{a,\alpha}}} \big[ N^x (\tau) \mid T_k >t , \F_{T_{k-1}}] \Big),
  \end{split}    \notag \\
        \begin{split}
          h^{x,z,\alpha}_t  (\lbrace\Lambda^j\rbrace_{j=1}^J,\Lambda^{\ell}, \Lambda^{z,\alpha}) (O)
          &  =  \alpha \, h^{x,z}_t (\lbrace\Lambda^j\rbrace_{j=1}^J,\Lambda^{\ell}, \Lambda^{z,\alpha}) (O),
        \end{split}   \notag
  \intertext{and clever weights \( w^{a,\alpha}_t (\pi, \Lambda^c, \Lambda^z) =  w^a_t (\pi, \Lambda^c)  w^{\alpha}_t (\Lambda^z)\) as}
  w^a_t (\pi, \Lambda^c) (O)& =   \frac{\delta_{a}(A) }{\pi (A\mid L) \prodi_{s < t} (1- \Lambda^c(ds \mid  \F_{s-}))}, \notag
  \\  
  w^{\alpha}_t (\Lambda^z) (O)& = 
                                \frac{ \Prodi_{s < t} \big( 
                                \Lambda^{z,\alpha} (ds \mid \F_{s-})\big)^{N^z(ds)}
                                \big(1- \Lambda^{z,\alpha} (ds \mid \F_{s-})\big)^{1-N^z(ds)}}{ \Prodi_{s < t} \big( \Lambda^z (ds \mid \F_{s-})\big)^{N^z(ds)}
                                \big(1- \Lambda^z (ds \mid \F_{s-})\big)^{1-N^z(ds)}} \notag \\
    & = 
      \alpha^{ N^z(t-)}   \frac{ \Prodi_{s < t}
      \big(1- \Lambda^{z,\alpha} (ds \mid \F_{s-})\big)^{1-N^z(ds)}}{ \Prodi_{s < t}
      \big(1- \Lambda^z (ds \mid \F_{s-})\big)^{1-N^z(ds)}},  \notag
\end{align}
where the factors involving the treatment distribution
\(\pi(a\mid L_0)\) or its intervened upon counterpart
\(\delta_{a}(a')\) vanish for the special case settings with no
baseline treatment.  Note that some of the clever covariates reduce
further to
\( h^{1,j}_t (\lbrace\Lambda^j\rbrace_{j=1}^J,\Lambda^{\ell},
\Lambda^{z,\alpha}) (O) = \sum_{k=1}^K \1 \lbrace T_{k-1} < t \le T_k
\rbrace \big( \1\lbrace j =1\rbrace - \, \EE_{P^{{a,\alpha}}} \big[
N^1 (\tau) \, \big\vert\, T_k >t, \Fj_{T_{k-1}}\big]\big)\), and
\( h^{z,j}_t (\lbrace\Lambda^j\rbrace_{j=1}^J,\Lambda^{\ell},
\Lambda^{z,\alpha}) (O) = \sum_{k=1}^K \1 \lbrace T_{k-1} < t \le T_k
\rbrace \big( N^{z}(T_{k-1}) - \, \EE_{P^{a,\alpha}} \big[ N^z (\tau)
\, \big\vert\, T_k>t, \Fj_{T_{k-1}}\big]\big)\), for
\( j=1,\ldots, J\).

\begin{thm}[Efficient influence curve for the \(\alpha\)-indexed
  parameters]
 The efficient influence curve for the intervention-specific parameter
 \(\Psi^{a,\alpha}_x \, : \, \mathcal{M}\rightarrow\R\), for given
 \(\alpha > 0\), can now be written as
\begin{align}
  & \phi^*_{\Psi_x^{a,\alpha}(),P} (O) \notag\\
  & =
    \sum_{j=1}^J\int_{t\le\tau}
    w_t^{a,\alpha} (\pi, \Lambda^c, \Lambda^z) (O)    h^{x,j}_t   (\lbrace\Lambda^j\rbrace_{j=1}^J,\Lambda^{\ell}, \Lambda^{z,\alpha}) (O)  \big( N^j(dt) - \Lambda^j(dt\mid  \F_{t-})\big)
            \label{eq:effi:if:1}
  \\
  & \,\, + \,   \int_{t\le\tau}
    w_t^{a,\alpha} (\pi, \Lambda^c, \Lambda^z) (O) h^{x,\ell}_t (\lbrace\Lambda^j\rbrace_{j=1}^J,\Lambda^{\ell}, \Lambda^{z,\alpha})(O)
    \big( N^{\ell}(dt) - \Lambda^{\ell}(dt \mid \F_{t-})\big)         \label{eq:effi:if:l} \\
  & \,\, + \,   \int_{t\le\tau}
    w_t^{a,\alpha} (\pi, \Lambda^c, \Lambda^z)
                        (O) h^{x,z, \alpha}_t (\lbrace\Lambda^j\rbrace_{j=1}^J,\Lambda^{z},
                        \Lambda^{z,\alpha})(O) \big( N^{z}(dt) - \Lambda^{z}(dt \mid
                        \F_{t-})\big)  \label{eq:effi:if:z} \\
  & \,\, + \,   \EE_{P^{{a,\alpha}}} [ N^x(\tau) \mid L_0] - \Psi_x^{a,\alpha}(P),       \label{eq:effi:if:mu} 
\end{align}
where the term in \eqref{eq:effi:if:z} is the specific contribution
that comes from \(\Lambda^z\) being unknown.
\label{thm:eic}
\end{thm}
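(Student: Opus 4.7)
The plan is to derive the efficient influence curve by path-wise differentiation of $\Psi^{a,\alpha}_x(P)=\EE_{P^{a,\alpha}}[N^x(\tau)]$ along one-parameter submodels $\{P_\varepsilon\}_{\varepsilon}$ through $P$ and to identify the Riesz representer in the $L^2(P)$ tangent space. Since $\mathcal{M}$ is nonparametric, the factorization \eqref{eq:likelihood_marked_interval} yields an orthogonal decomposition of the tangent space into scores for (i) the baseline law $\mu$, (ii) the treatment mechanism $\pi$, (iii) each cause-specific intensity $\Lambda^j$ for $j=1,\ldots,J$, (iv) the $\ell$-intensity $\Lambda^\ell$, (v) the $z$-intensity $\Lambda^z$, and (vi) the censoring intensity $\Lambda^c$. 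Any candidate EIC thus decomposes into a sum of one contribution per factor, and I plan to compute these contributions one at a time and check that their sum is precisely \eqref{eq:effi:if:1}--\eqref{eq:effi:if:mu}.

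The routine contributions go as follows. Differentiating along the $\mu$-direction produces the baseline-covariate term in \eqref{eq:effi:if:mu}. Because $P^{a,\alpha}$ sets $A_0=a$ deterministically and $\Lambda^c\equiv 0$, scores for $\pi$ and $\Lambda^c$ make no contribution to $\Psi^{a,\alpha}_x(P)$ as a functional, but the change of measure from $P^{a,\alpha}$ to $P$ contributes the Radon--Nikodym factor $w_t^a(\pi,\Lambda^c)$. For the $\Lambda^j$- and $\Lambda^\ell$-factors I would carry out the standard counting-process Duhamel argument: expressing $\EE_{P^{a,\alpha}}[N^x(\tau)]$ via sequential conditioning on each event time and applying the innovation/martingale representation produces integrals $\int h_t^{x,\cdot}\,(dN^{\cdot}-d\Lambda^{\cdot})$, with $h_t^{x,\cdot}$ the conditional-mean difference appearing in the definitions of $h^{x,j}_t$ and $h^{x,\ell}_t$ (reducing, for $(x,j)=(1,1)$ and $(x,j)=(z,z)$, to the simplified forms stated). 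Rewriting these integrals back under $P$ introduces the additional factor $w_t^{\alpha}(\Lambda^z)$, so that the complete weight is $w_t^{a,\alpha}=w_t^a\cdot w_t^{\alpha}$.

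The non-routine piece, which I expect to be the main obstacle, is the $\Lambda^z$-score contribution \eqref{eq:effi:if:z}. The difficulty is that $\Lambda^z$ appears in two places under $P^{a,\alpha}$: directly in the $N^z$-compensator, which is $\alpha\Lambda^z$ in the post-interventional law, and implicitly in the likelihood ratio $w_t^{\alpha}(\Lambda^z)$. Perturbing $\Lambda^z\to\Lambda^z+\varepsilon g$ therefore perturbs both through the chain rule $\Lambda^{z,\alpha}=\alpha\Lambda^z$. Running the Duhamel argument on the $N^z$-history under $P^{a,\alpha}$ yields a clever covariate equal to $\alpha$ times the conditional-mean difference based on $\Delta N^z(t)\in\{0,1\}$, which is exactly $h_t^{x,z,\alpha}$. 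Changing measure from $P^{a,\alpha}$ to $P$ then converts the $P^{a,\alpha}$-martingale increment $N^z(dt)-\alpha\Lambda^z(dt\mid\F_{t-})$ into the $P$-martingale increment $N^z(dt)-\Lambda^z(dt\mid\F_{t-})$, while the weight $w_t^{a,\alpha}$ appears in front; together these produce \eqref{eq:effi:if:z}.

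To close, I would verify that each summand sits in the nonparametric tangent space at $P$: the $L_0$-term is $P$-mean-zero by construction, and each of the integrals in \eqref{eq:effi:if:1}--\eqref{eq:effi:if:z} is a predictable stochastic integral against a $P$-martingale increment on the observed counting processes, hence $P$-mean-zero and orthogonal across the different marks. Efficiency is then automatic, since under the nonparametric model the tangent space equals $L^2_0(P)$ and no further projection is needed.
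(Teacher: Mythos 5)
Your proposal is correct and follows essentially the same route as the paper's derivation: factorize the likelihood as in \eqref{eq:likelihood_marked_interval}, differentiate pathwise along submodels for each factor, and identify the canonical gradient component by component, with the $\mu$-score giving \eqref{eq:effi:if:mu}, the $\pi$- and $\Lambda^c$-scores contributing nothing (their role being absorbed into the inverse-probability weight $w^a_t$), the martingale/Duhamel representation giving the $h^{x,\cdot}_t$-integrals, and the chain rule through $\Lambda^{z,\alpha}=\alpha\Lambda^z$ producing the $\alpha$-scaled clever covariate in \eqref{eq:effi:if:z}. The only loose point is your remark that $\Lambda^z$ ``appears in two places'' including the likelihood ratio $w^\alpha_t(\Lambda^z)$ --- the parameter itself depends on $\Lambda^z$ only through the post-interventional compensator $\alpha\Lambda^z$, and the weight arises afterwards when rewriting the $P^{a,\alpha}$-inner product under $P$ via the predictable covariation --- but this does not affect the correctness of your conclusion.
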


We see immediately that the special case \(\alpha = 1\) corresponds to
\(\Lambda^{z,\alpha} = \Lambda^{z}\), i.e., not intervening on
\(\Lambda^z\), in which case
\( w^{a,\alpha}_t (\pi, \Lambda^c, \Lambda^z) = w^a_t (\pi,
\Lambda^c)\) and the contribution \eqref{eq:effi:if:z} corresponds to
the usual (when not intervened upon) intensity contribution to the
efficient influence curve. Furthermore, the special case
\(\alpha = 0\) corresponds to censoring events of type \(z\), in which
case the contribution \eqref{eq:effi:if:z} is zero and the numerator
of \( w^{\alpha}_t (\Lambda^z) (O)\) becomes \(1-N^z(t-)\).

We introduce notation for the derivatives of the \(\alpha\)-fixed
parameters:
\begin{align*}
(\Psi^{a}_{z,P})^{(1)}(\alpha) := \tfrac{d}{d\alpha}
\Psi^{a,\alpha}_z (P) , \quad \text{ and } \quad
(\Psi^{a}_{1,P})^{(1)} (\alpha) := \tfrac{d}{d\alpha}
  \Psi^{a,\alpha}_1 (P),
\end{align*}
where the superscript `\((1)\)' indicates the first derivative with
respect to \(\alpha\). By Lemma \ref{lemma:shape:Psi:z:alpha} we have
that \((\Psi^{a}_{z,P})^{(1)}(\alpha)>0\) as long as
\(L^a(P) := P(\Lambda^z(t\mid \F^{a}_{t-})>0)>0\) and the target
level for the curve \(\alpha \mapsto \Psi_z^{a,\alpha}(P)\) falls
inside \((0,L^a(P))\); this is important to ensure pathwise
differentiability of the \(\alpha(P)\) parameters, for which we
present efficient influences curves for in following lemma.

\begin{lemma}[General form of the efficient influence curves for
  calibration parameters.]
Let 
\[
\alpha^{a,\gamma}(P) = (\Psi_z^{a,\cdot}(P))^{-1}(\gamma(P))
\]
for a target functional \(\gamma \, :\, \mathcal{M}\to\mathbb{R}\). Then the
efficient influence curve of \(\alpha^{a,\gamma}\) is
\[
  \phi^*_{\alpha^{a,\gamma},P}(O) =
  \frac{1}{(\Psi^a_{z,P})^{(1)}(\alpha^{a,\gamma}(P))} \Big(
  \phi^*_{\gamma(),P}(O) -
  \phi^*_{\Psi_z^{a,\alpha^{a,\gamma}(P)}(),P}(O) \Big). 
\]
For the choices of calibration we have considered, we have: 
\begin{itemize}
\item[]  \textit{Fixed target:}
  \(\gamma(P) = \theta\), with
  \(0 < \theta < L^a(P)\), \(\phi^*_{\gamma(),P}=0\);
\item[] \textit{Absolute shift:}
  \(\gamma(P) = \delta + \Psi_z^{a,1}(P)\), with
  \(0 < \delta + \Psi_z^{a,1}(P)<L^a(P)\),
  \(\phi^*_{\gamma(),P}=\phi^*_{\Psi_z^{a,1}(),P}\);
\item[] \textit{Relative shift:} \(\gamma(P) = \rho \Psi_z^{a,1}(P)\),
  for \(0 < \rho \Psi_z^{a,1}(P) < L^a(P)\),
  \(\phi^*_{\gamma(),P}=\rho\phi^*_{\Psi_z^{a,1}(),P}\);
\item[] \textit{Cross-arm target:} \(\gamma(P) = \Psi_z^{1-a,1} (P)\),
  for \(\Psi^{1-a,1}_z (P) < L^a(P)\),
  \(\phi^*_{\gamma(),P}=\phi^*_{\Psi_z^{1-a,1}(),P}\).
\end{itemize}
\label{lemma:eic:alpha:theta}
\end{lemma}

\begin{thm}[Efficient influence curve of composite
  parameter.]
  The efficient influence curve for the composite parameter
  \(\Psi_1^a \, : \, \mathcal{M}\rightarrow\R\) defined as
  \(\Psi_1^a (P) = \Psi_1^{a,\alpha(P)} (P)\) with the parameter
  \(\alpha (P)\) defined as one of the choices of Lemma
  \ref{lemma:eic:alpha:theta} is
\begin{align}
  & \phi_{\Psi_1^a (),P}^{*}  (O) = \phi_{\Psi_1^{a,\alpha(P)}(),P}^*  (O) +  
    {(\Psi^{a}_{1,P})^{(1)} (\alpha(P))}\phi^*_{\alpha(),P} (O) ,
         \label{eq:effi:if:alpha:P}
\end{align}
where the relevant \(\alpha(P)\)-specific efficient influence curve is
substituted for \(\phi^*_{\alpha(),P}\).
\label{thm:eic:Psi:alpha:P}
\end{thm}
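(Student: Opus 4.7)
The plan is to derive the efficient influence curve of the composite functional $P\mapsto \Psi_1^{a,\alpha(P)}(P)$ by applying the chain rule (functional delta method) for pathwise derivatives, exploiting the fact that the composite parameter depends on $P$ in two conceptually separable ways: explicitly through its second argument as an $\alpha$-indexed parameter, and implicitly through the scalar calibrator $\alpha(P)$ entering the first argument. Since the model $\mathcal{M}$ is nonparametric and the tangent space is $L^2_0(P)$, the EIC is the unique canonical gradient; it suffices to compute the pathwise derivative along arbitrary regular submodels and read off the Riesz representer.

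The concrete steps are as follows. First, fix a regular one-dimensional submodel $\{P_\epsilon\}\subset\mathcal{M}$ passing through $P$ at $\epsilon=0$ with score $s \in L^2_0(P)$. Writing $\Psi_1^a(P_\epsilon)=\Psi_1^{a,\alpha(P_\epsilon)}(P_\epsilon)$ and differentiating via the chain rule gives
\begin{align*}
\frac{d}{d\epsilon}\Psi_1^a(P_\epsilon)\bigg|_{\epsilon=0}
&= \frac{d}{d\epsilon}\Psi_1^{a,\alpha_0}(P_\epsilon)\bigg|_{\epsilon=0,\,\alpha_0=\alpha(P)}
+ (\Psi^a_{1,P})^{(1)}(\alpha(P))\cdot\frac{d}{d\epsilon}\alpha(P_\epsilon)\bigg|_{\epsilon=0}.
\end{align*}
By Theorem \ref{thm:eic} applied at the fixed value $\alpha=\alpha(P)$, the first term equals $\int \phi^*_{\Psi_1^{a,\alpha(P)}(),P}(o)\,s(o)\,dP(o)$; by whichever of Lemmas \ref{lemma:eic:alpha:theta}--\ref{lemma:eic:alpha:mediation} defines $\alpha(P)$, the second term equals $(\Psi^a_{1,P})^{(1)}(\alpha(P))\int \phi^*_{\alpha(),P}(o)\,s(o)\,dP(o)$. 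Summing and using linearity of the inner product yields
\begin{align*}
\frac{d}{d\epsilon}\Psi_1^a(P_\epsilon)\bigg|_{\epsilon=0}
&= \int \Bigl(\phi^*_{\Psi_1^{a,\alpha(P)}(),P}(o) + (\Psi^a_{1,P})^{(1)}(\alpha(P))\,\phi^*_{\alpha(),P}(o)\Bigr)\,s(o)\,dP(o).
\end{align*}
Since this identity holds for all scores $s$ in the nonparametric tangent space, and both summands are mean-zero square-integrable functions of $O$ (hence lie in that tangent space), the bracketed function is the canonical gradient, establishing \eqref{eq:effi:if:alpha:P}.

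The main technical obstacle is the rigorous justification of the chain rule in this infinite-dimensional setting, which requires two ingredients: (i) the map $\alpha\mapsto \Psi_1^{a,\alpha}(P)$ is continuously differentiable at $\alpha(P)$, and (ii) the mixed partial $\partial_\epsilon\Psi_1^{a,\alpha_0}(P_\epsilon)$ is continuous in $\alpha_0$ near $\alpha(P)$ uniformly in a neighborhood of $\epsilon=0$. Both properties can be obtained from the product-integral representation \eqref{eq:likelihood_marked_interval}: the $\alpha$ appears only as a multiplicative factor in $\Lambda^{z,\alpha}=\alpha\Lambda^z$, so differentiability in $\alpha$ reduces, via dominated convergence under non-explosion of $N^z$ on $[0,\tau]$, to ordinary differentiation under the integral sign. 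The non-degeneracy $(\Psi^a_{z,P})^{(1)}(\alpha(P))>0$, guaranteed by Lemma \ref{lemma:shape:Psi:z:alpha} together with the interiority assumptions on the calibration targets stated in Lemmas \ref{lemma:eic:alpha:theta}--\ref{lemma:eic:alpha:mediation}, is what ensures pathwise differentiability of $P\mapsto\alpha(P)$ in the first place and hence licenses step (ii).
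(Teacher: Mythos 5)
Your proof is correct and follows what is essentially the same route as the paper: the composite parameter is differentiated along a regular submodel via the chain rule, the two resulting terms are represented by the canonical gradients already established in Theorem \ref{thm:eic} (at the fixed value $\alpha=\alpha(P)$) and in the relevant Lemma among \ref{lemma:eic:alpha:theta}--\ref{lemma:eic:alpha:mediation}, and the sum is identified as the efficient influence curve because the tangent space of the nonparametric model is all of $L^2_0(P)$. Your added remarks on the regularity needed to license the chain rule (smoothness of $\alpha\mapsto\Psi_1^{a,\alpha}(P)$ and the non-degeneracy $(\Psi^{a}_{z,P})^{(1)}(\alpha(P))>0$) correctly mirror the conditions the paper imposes via Lemma \ref{lemma:shape:Psi:z:alpha} and the interiority restrictions on the calibration targets.
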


\subsection{Efficiency theorems for targeted substitution estimators}
\label{sec:targeted:inference:summary} 

Below we provide the main inferential results for targeted
substitution estimators for both \(\alpha\)-fixed parameters
$\Psi_x^{a,\alpha}(P)=\EE_{P^{a,\alpha}}[N^x(\tau)]$, for $x\in\{1,z\}$,
calibrated parameters $\alpha(P)$ obtained by inverting the map
$\alpha\mapsto\Psi_z^{a,\alpha}(P)$, and composite parameters
$\Psi_1^{a,\alpha(P)}(P)$ obtained by plugging $\alpha(P)$ into the
outcome functional.
These results are turned into operational estimators in Section
\ref{sec:TMLE:estimation}, where we describe a targeted maximum
likelihood estimation approach.
Below, Theorem~\ref{thm:eff:estimator:alpha:fixed} first establishes
asymptotic linearity of a targeted estimator for the parameter
$\Psi_x^{a,\alpha}(P)$ at a fixed $\alpha >0$ under standard nuisance
rate and empirical process conditions. Next, Lemma
\ref{lemma:eff:estimator:alpha:fixed} studies estimation for
calibrated parameters $\alpha(P)$.  Finally,
Theorem~\ref{thm:eff:estimator:composite} establishes asymptotic
linearity of the targeted estimator for the composite parameter
$\Psi_1^a(P) = \Psi_1^{a,\alpha(P)}(P)$. A brief comment on variance
estimation follows the theorems. Proofs are given in Supplementary
Appendix F.

\begin{thm}[Asymptotically linear estimation of the \(\alpha\)-fixed
  parameters.]
  Consider nuisance estimators
  \(\hat{P}_n =
  \{\hat{\pi}_n,\hat{\lambda}^c,\hat{\lambda}^z,\hat{\lambda}^{\ell},\hat{\lambda}^1,\ldots,\hat{\lambda}^J\}
  \in \mathcal{M}\). Assume that the following conditions hold:
  \begin{enumerate}
\item[1a)] \textit{Nuisance-rate conditions:} The nuisance components
  \(\eta\in\{\pi,\lambda^c,\lambda^z,
  \lambda^{\ell},\lambda^1,\ldots,\lambda^J\}\) are estimated by
  \(\hat\eta_n\) for which
  \((\sum_{\eta_1 \in \{\pi,\lambda^c,\lambda^z\}}\|\hat\eta_{1,n} -
  \eta_{1,0}\|_{L^2(P^{a,\alpha}_0)}) (\sum_{\eta_2 \in \{\lambda^z,
    \lambda^{\ell},\lambda^1,\ldots,\lambda^J\}}\|\hat\eta_{2,n} -
  \eta_{2,0}\|_{L^2(P^{a,\alpha}_0)})=o_P(n^{-1/2})\);\vspace{0.2cm}
\item[1b)] \textit{Empirical process control:} The class
  \(\{\phi^*_{\Psi_x^{a,\alpha}(),P}:P\in\mathcal M\}\) is
  $P_0$-Donsker, and
  \(\|\phi^*_{\Psi_x^{a,\alpha}(),\hat{P}_n}-\phi^*_{\Psi_x^{a,\alpha}(),P_0}\|_{L^2(P_0)}\overset{P}{\to}0\);
\end{enumerate}
and that the estimator \(\hat{P}_n\) solves the efficient influence
curve equation
\begin{align}
  \mathbb{P}_n \phi^*_{\Psi_x^{a,\alpha}(),\hat{P}_n} =
  o_P(n^{-1/2}). 
  \label{eq:key:EIC:eq}
\end{align}
It follows that the remainder is asymptotically negligible,
\begin{align}
  R_{\Psi^{a,\alpha}_x()} (\hat{P}_n, P_0) & = \Psi_x^{a,\alpha} (\hat{P}_n) - \Psi_x^{a,\alpha} (P_0) +
                                               P_0 \phi^*_{\Psi_x^{a,\alpha}(),\hat{P}_n} = o_P( n^{-1/2}),
                            \label{eq:remainder:alpha:x:negligible}
\end{align}
and,
\begin{align}
  \Psi_x^{a,\alpha} (\hat{P}_n) -    \Psi_x^{a,\alpha} (P
  _0) =\mathbb{P}_n \phi^*_{\Psi_x^{a,\alpha}(),P_0} + o_P(n^{-1/2});
  \label{eq:asympt:linearity} 
\end{align}
that is, \(\hat{\psi}_n =  \Psi_x^{a,\alpha} (\hat{P}_n)\) is asymptotically
linear at \(P_0\) with influence function equal to the efficient
influence curve \(\phi^*_{\Psi_x^{a,\alpha}(),P_0}\).
\label{thm:eff:estimator:alpha:fixed}
\end{thm}

The form of the rate condition in Assumption 1a encodes the double
robustness properties of the estimation problem, and indicates that
consistency is achieved whenever either the nuisance group
\(\{\pi,\lambda^c,\lambda^z\}\) or the nuisance group
\(\{\lambda^z, \lambda^{\ell},\lambda^1,\ldots,\lambda^J\}\) is
estimated consistently.
The sufficiency of the rate condition as stated follows directly from
Lemma E.1 (Supplementary Material E), which gives the second-order
remainder for the target parameter \(\Psi_x^{a,\alpha}(P)\).
Inspection of that remainder shows that it depends only on intensities
inside integrals, so the actual necessary convergence requirements are
likely weaker than those stated in Assumption 1a.
Because \(\lambda^z\) appears in both groups, its estimation error
enters both factors of the product-form remainder and therefore cannot
be misspecified. A sufficient condition for asymptotic linearity is
still that
\(\|\hat\lambda^z_n-\lambda^z\|_{L^2(P^{a,\alpha}_0)} =
o_P(n^{-1/4})\), the usual requirement in double robustness settings
when relying on machine learning nuisance estimators rather than
parametric rates. Moreover, under \(\alpha\)-scaling its weight
contribution is in fact driven by error in estimating the cumulative
intensity, not by ratios of intensities.

\begin{lemma}[Asymptotically linear estimation of calibrated
  \(\alpha(P)\).] Consider
  \(\alpha^{a,\gamma} \, : \, \mathcal{M}\rightarrow\R\) defined as in
  Lemma \ref{lemma:eic:alpha:theta}: 
  \[\alpha^{a,\gamma}(P) =
    (\Psi_z^{a,\cdot}(P))^{-1}\big(\gamma(P)\big),\] with
  \(\gamma:\mathbb{R}^2\to\mathbb{R}\) given as one of the four
  examples of Lemma \ref{lemma:eic:alpha:theta} and defined such that
  \(\gamma(P)\in [\gamma_1, \gamma_2]\) with
  \(0 < \gamma_1 \le \gamma_2 < L^a(P)\).

  When needed (depending on the choice of calibration target), define
  \(\hat{\gamma}_n = \gamma (\hat{P}_n)\), where \(\hat{P}_n\) is an
  estimator fulfilling Assumptions 1a and 1b from Theorem
  \ref{thm:eff:estimator:alpha:fixed} for \(x=z\) and \(\alpha=1\) and
  which solves the efficient influence curve equation
  \(\mathbb{P}_n \phi^*_{ \gamma(),\hat{P}_n} = o_P(n^{-1/2})\).
  For any of the choices of \(\gamma\) from Lemma
  \ref{lemma:eic:alpha:theta}, it follows from Theorem
  \ref{thm:eff:estimator:alpha:fixed}, that \(\hat{\gamma}_n\) is
  asymptotically linear with influence function equal to the efficient
  influence curve \(\phi^*_{\gamma(),P_0}\).

  Define
  \(\hat{\alpha}_n = \alpha(\hat{P}_n) = (\Psi^{a,\cdot}_z
  (\hat{P}_n))^{-1} (\hat{\gamma}_n)\),
  where \(\hat{P}_n\) is an estimator fulfilling Assumptions 1a and 1b
  from Theorem \ref{thm:eff:estimator:alpha:fixed} for \(x=z\) and
  \(\alpha = \hat{\alpha}_n \), and which solves \eqref{eq:key:EIC:eq}
  at \(\alpha = \hat{\alpha}_n \):
\begin{align}
  \mathbb{P}_n \phi^*_{\Psi_x^{a,\hat{\alpha}_n}(),\hat{P}_n} =
  o_P(n^{-1/2}). 
  \label{eq:key:EIC:eq:alphahat}
\end{align}  
Further assume the following:
  \begin{enumerate}
  \item[2a)] \(L^a(P)>0\) for each \(P\in\mathcal{M}\), so that
    \(\alpha\mapsto\Psi^{a,\alpha}_z(P)\) is strictly increasing and
    concave.
\item[2b)] The first and second derivatives of
  $\alpha\mapsto \Psi^a_{z,P}(\alpha)$ exist and are uniformly bounded
  in a neighborhood of $(\alpha_0,P_0)$, where
  \(\alpha_0 :=\alpha^{a,\gamma}(P_0)\).
\item[2c)] {The class
    \( \{\, f_z(\alpha,P):\alpha\in [\alpha_1,\alpha_2],\;
    P\in\mathcal{M}\,\}\),
    \(\alpha_1 = (\Psi_z^{a,\cdot}(P))^{-1}(\gamma_1)\),
    \(\alpha_2 = (\Psi_z^{a,\cdot}(P))^{-1}(\gamma_2)\), with
    \( f_z(\alpha,P)(O):=\phi^*_{\Psi_z^{a,\alpha}(),P}(O)\), is
    \(P_0\)-Donsker, and, whenever
    \((\hat{\alpha}_n,\hat{P}_n)\overset{P}{\to}(\alpha_0,P_0)\), we
    have
    \(\|f_z(\hat{\alpha}_n,\hat{P}_n)-f_z(\alpha_0,P_0)\|_{L^2(P_0)}\overset{P}{\to}0\).}
  \end{enumerate}
  Then it holds that
  \begin{align}
  \mathbb{P}_n    \phi^*_{\alpha^{a, \gamma}(),\hat{P}_n} =  o_P(n^{-1/2}), 
    \label{eq:key:EIC:eq:alpha:theta}
\end{align}
and
\begin{align}
  \alpha(\hat{P}_n) -   \alpha (P   _0) =\mathbb{P}_n \phi^*_{\alpha^{a, \gamma}(),P_0} + o_P(n^{-1/2}),
  \label{eq:asympt:linearity:alpha:theta} 
\end{align}
that is, \(\hat{\alpha}_n = \alpha(\hat{P}_n)\) is asymptotically
linear at \(P_0\) with influence function equal to the efficient
influence curve \(\phi^*_{\alpha^{a, \gamma}(),P_0}\) presented in
Lemma \ref{lemma:eic:alpha:theta}. Note that for the purpose of
\eqref{eq:asympt:linearity:alpha:theta}, it suffices that
\(\Psi_z^{a,\hat{\alpha}^*_n}(\hat{P}_n)= \hat{\gamma}_n +
o_P(n^{-1/2})\); exact equality is not needed.
\label{lemma:eff:estimator:alpha:fixed}
\end{lemma}

We remark that Assumptions 2c, and 3b which follows below, are not
expected to be much stronger in practice than Assumption 1b, provided
that $[\alpha_1,\alpha_2]$ restricts to a range avoiding instabilities
such as exploding clever weights.

\begin{thm}[Asymptotically linear estimation of composite parameter.]
  Construct \(\hat{\alpha}_n=\alpha(\hat{P}_n)\) as in Lemma
  \ref{lemma:eff:estimator:alpha:fixed}. Consider the composite
  parameter \(\Psi_1^a(P)=\Psi_1^{a,\alpha(P)}(P)\) and its estimator
  \(\hat\psi_{1,n}=\Psi_1^{a,\hat{\alpha}_n}(\hat P_{n})\), such that
\begin{align}
  \mathbb{P}_n \phi^*_{\Psi_1^{a,\hat{\alpha}_n}(),\hat{P}_{n}} =
  o_P(n^{-1/2}).
  \label{eq:key:composite:EIC:eq}
\end{align}
Under Assumptions 1a and 1b from Theorem
\ref{thm:eff:estimator:alpha:fixed}, Assumptions 2a, 2b and 2c from Lemma
\ref{lemma:eff:estimator:alpha:fixed}, as well as the following
conditions:
\begin{enumerate}
\item[3a)] The first and second derivatives of
  $\alpha\mapsto \Psi^a_{1,P}(\alpha)$ exist and are uniformly bounded
  in a neighborhood of $(\alpha_0,P_0)$.
\item[3b)] {The class
    \( \{\, f_1(\alpha,P):\alpha\in [\alpha_1,\alpha_2],\;
    P\in\mathcal{M}\,\}\), with
    \( f_1(\alpha,P)(O):=\phi^*_{\Psi_1^{a,\alpha}(),P}(O)\), is
    \(P_0\)-Donsker, and, whenever
    \((\hat{\alpha}_n,\hat{P}_n)\overset{P}{\to}(\alpha_0,P_0)\), we
    have
    \(\|f_1(\hat{\alpha}_n,\hat{P}_n)-f_1(\alpha_0,P_0)\|_{L^2(P_0)}\overset{P}{\to}0\).}
  \end{enumerate}
  it holds that
\[
\hat\psi_{1,n} - \Psi_1^{a}(P_0)
= \mathbb{P}_n \,\phi_{\Psi_1^a(),P_0}^* + o_P(n^{-1/2}),
\]
with the efficient influence curve as defined in Theorem
\ref{thm:eic:Psi:alpha:P}.
\label{thm:eff:estimator:composite}
\end{thm}

A straightforward consequence of each lemma/theorem is that we can use
the asymptotic normal distribution
\( \sqrt{n}\, \big( \hat{\psi}^*_n - \psi_0 \big)
\overset{\mathcal{D}}{\rightarrow} \mathcal{N} (0, {P}_0
\phi^*_{\Psi(),{P}_0} )^2) \) following from asymptotic linearity to
provide an approximate two-sided confidence interval. The asymptotic
variance of the estimator is given from the variance of the efficient
influence function and can be estimated by \(\hat{\sigma}_n^2 /n\)
where
\(\hat{\sigma}_n^2 = \mathbb{P}_n (\phi^*_{\Psi(),\hat{P}_n} )^2\).

\section{Targeted maximum likelihood estimation}
\label{sec:TMLE:estimation}

The theoretical analysis of the previous section shows the form of the
efficient influence curves and double-robustness structure for the
$\alpha$-indexed and calibrated targets.  In this section, we present
a targeted maximum likelihood estimation (TMLEs) procedure that
accepts both flexible machine learning based and (semi)parametric
regression initial fits for the predictable intensities and the
baseline treatment model, updates those fits to solve the efficient
influence equations, and inverts these to obtain estimates of
calibrated parameters, and targets final estimators for the composite
parameters. 
Formal regularity conditions and asymptotic results are as summarized
in Section~\ref{sec:targeted:inference:summary} and the technical
implementation details are deferred to Supplementary Appendix G. Here
we first give a high-level overview of the main components for
constructing and providing inference for targeted estimators:

\begin{enumerate}
\item  \textit{Construction of estimators for clever weights and clever
    covariates corresponding to a given \(\alpha\)-fixed target;
    plug-in evaluation of that same target.} Given initial nuisance
  estimators
  \(\hat P_n=(\hat{\Lambda}_n^{1},\ldots,\hat{\Lambda}_n^{J},
  \hat{\Lambda}_n^{\ell},\hat{\Lambda}_n^{z},\hat{\Lambda}_n^{c},\hat{\pi}_n)\)
  of the predictable intensities and of the baseline treatment
  distribution, we describe an algorithm for estimation of clever
  covariates and evaluation of the g-computation formula (overview in
  Section \ref{sec:clever:covar:estimation}), and estimation of clever
  weights (Section \ref{sec:estimation:clever:weights}).
\item \textit{Estimation of the \(\alpha\)-fixed parameters
    \(\Psi_z^{a,\alpha} (P)\) and \(\Psi_1^{a,\alpha}(P)\).} We
  describe a targeted maximum likelihood estimation procedure
  (combining Sections
  \ref{sec:initial:estimation}--\ref{sec:estimation:TMLE:step}) for
  estimation of an \(\alpha\)-fixed parameter
  \(\Psi_x^{a,\alpha} (P)\), \(x \in \lbrace 1,z\rbrace\). This
  procedure provides estimators which solve
  \begin{align*}
\mathbb{P}_n \phi^*_{\Psi_x^{a,\alpha}(),\hat{P}_n^{*}} =
    o_P(n^{-1/2}),
  \end{align*}
  with
  \(\hat{P}_n^{*}= (\hat{\Lambda}^{1}_{n,m^*}, \ldots,
  \hat{\Lambda}^{J}_{n,m^*},
  \hat{\Lambda}^{\ell}_{n,m^*},\hat{\Lambda}^z_{n,m^*},
  \hat{\Lambda}^c_n, \hat{\pi}_n)\), for fixed \(\alpha \ge 0 \).
  Theorem \ref{thm:eff:estimator:alpha:fixed} establishes inference
  for \(\hat{\psi}^{a,\alpha,*}_{x,n} = \Psi_x^{a,\alpha} (\hat{P}_n^*)\),
  and the variance can be estimated by the empirical variance of the
  estimated efficient influence curve,
  \(\hat{\sigma}_n^2 = \mathbb{P}_n (
  \phi^*_{\Psi_x^{a,\alpha}(),\hat{P}_n})^2 /n \).
\item \textit{Estimation of calibration parameters \(\alpha(P)\).}
  This involves the inverse of \(\alpha \mapsto \Psi_z^{a,\alpha}(P)\)
  and for inference also the derivative
  \((\Psi^{a}_{z,P})^{(1)}(\alpha) := \tfrac{d}{d\alpha}
  \Psi^{a,\alpha}_z (P)\). In practice, we replace
  \(\Psi_z^{a,\alpha}(P)\) by its targeted estimator
  \(\hat{\psi}^{a,\alpha,*}_{z,n}\), when relevant compute the
  targeted estimator \(\hat{\gamma}_n^* =\gamma(\hat{P}_n^*)\), and
  define \(\hat{\alpha}_n^*\) as the solution to
  \(\hat{\psi}^{a,\alpha,*}_{z,n} = \hat{\gamma}_n^*\) (a concrete
  algorithm for finding the inverse is provided in Supplementary
  Appendix G). Inference for
  \(\hat{\alpha}^{*}_{n}= \alpha(\hat{P}_n^*)\) follows from Lemma
  \ref{lemma:eff:estimator:alpha:fixed}, and
  the variance can be estimated by the empirical variance of the
  estimated efficient influence curve,
  \(\hat{\sigma}_n^2 = \mathbb{P}_n (\phi^*_{\alpha(),\hat{P}_n})^2 /n
  \). For variance estimation, a consistent estimator of the
  derivative \((\Psi^{a}_{z,P})^{(1)}(\alpha(P)) \) is needed, which
  can be achieved with a simple difference estimator (detailed in
  Supplementary Appendix G).
\item \textit{Estimation of the composite parameter
    \(\Psi_1^a(P) = \Psi_1^{a,\alpha(P)}(P)\).} We achieve this by
  plugging in the estimator \(\hat{\alpha}_n^* = \alpha(\hat{P}_n^*)\)
  from 2., and subsequently using the targeting procedure from
  1. Theorem \ref{thm:eff:estimator:composite} establishes inference
  for the estimator
  \(\hat{\psi}^{*}_{1,n} = \Psi^{a,\hat{\alpha}_n^*}_1 (\hat{P}_n^*)\)
  which follows under the same conditions as in 1. and 2., where
  variance estimation, using the empirical variance of the estimated
  efficient influence curve,
  \(\hat{\sigma}_n^2 = \mathbb{P}_n ( \phi_{\Psi_1^a
    (),\hat{P}_n}^{*})^2 /n \), also requires that the derivative
  \((\Psi^{a}_{1,P})^{(1)}(\alpha(P)) \) is estimated consistently
  which can be based on the same approach as for
  \((\Psi^{a}_{z,P})^{(1)}(\alpha(P)) \).
\end{enumerate}

\subsection{Initial estimation}
\label{sec:initial:estimation} 

The estimation algorithms that follow refine initial intensity
estimators to, in the end, achieve valid inference for the target
parameters. Such initial estimators of intensities may be obtained by
classical Cox regression, highly adaptive lasso, or other flexible
learners that can be written on the following form:
\begin{align*}
  \begin{split}
    \Lambda^j (dt \mid \F_{t-})
    & = 
      \1\lbrace  T^{\mathrm{end}} \ge t\rbrace
    \exp ( f^j(t, \kappa^{\ell}(\bar{N}^{\ell}(t-)),\kappa^{z}(\bar{N}^{z}(t-)), A_0,L_0)) dt, \text{ for } j=1,\ldots, J, \\
    \Lambda^{\ell} (dt \mid \F_{t-})
    & = 
      \1\lbrace  T^{\mathrm{end}} \ge t, T^{\ell} \ge t\rbrace
      \exp ( f^{\ell}(t, \kappa^{\ell}(\bar{N}^{\ell}(t-)),\kappa^{z}(\bar{N}^{z}(t-)), A_0,L_0)) dt, \\
    \Lambda^{z} (dt \mid \F_{t-})
    & = 
      \1\lbrace  T^{\mathrm{end}} \ge t, T^{z} \ge t\rbrace
      \exp ( f^{z}(t, \kappa^{\ell}(\bar{N}^{\ell}(t-)),\kappa^{z}(\bar{N}^{z}(t-)), A_0,L_0)) dt , \\
    \Lambda^c (dt \mid \F_{t-})
    & = 
      \1\lbrace  T^{\mathrm{end}} \ge t\rbrace
     \exp ( f^c(t, \kappa^{\ell}(\bar{N}^{\ell}(t-)),\kappa^{z}(\bar{N}^{z}(t-)), A_0,L_0)) dt ,
  \end{split}
\end{align*}
where \(f^j, f^{\ell}, f^z, f^c\) are functions of time \(t\),
treatment \(A_0\) and baseline covariates \(L_0\) as well as the past
of the processes \(N^{\ell}\) and \(N^{z}\), i.e.,
\(\bar{N}^{\ell}(t-) = (N^{\ell}(u) \, : \, u <t)\) and
\(\bar{N}^{z}(t-) = (N^{z}(u) \, : \, u <t)\), via summary functions
\(\kappa^{\ell}\) and \(\kappa^{z}\). We assume that
\(\kappa^{\ell},\kappa^{z}\) are fixed a priori, while
\(f^j, f^{\ell}, f^z, f^c\) may be data-adaptively learned from the
data, e.g., with highly adaptive lasso estimation
\citep{benkeser2016highly,van2017generally,rytgaard2021estimation},
with estimated versions denoted
\(\hat{f}_n^j, \hat{f}_n^{\ell}, \hat{f}_n^z,
\hat{f}_n^c\). Alternatively, Cox regression estimators can be applied
directly, in which case the precise functional form of
\(f^j, f^{\ell}, f^z, f^c\) is specified a priori and split into an
unrestricted (log) baseline hazard and a linear predictor capturing
dependence on baseline covariates and event history.

\subsection{Estimation of clever covariates} 
\label{sec:clever:covar:estimation}

Our proposal for estimation of clever covariates sketches one
particular algorithm, which can be viewed as a foundation on which
more general and computationally efficient solutions can be
built. This algorithm uses that the sample space of
\(\bar{N}^\ell(t-),\bar{N}^z(t-)\) can be partitioned into a finite
collection of disjoint cubes,
\(\cupdot_{s\in\mathscr{S}} \mathcal{N}_s\), indexed by a finite index
set \(\mathscr{S}\), such that
\begin{align}
  {\hat{f}^{\cdot}_n}(t, \kappa^{\ell}(\bar{N}_1^{\ell}(t-)), \kappa^{z}(\bar{N}_1^{z}(t-)), a_0,\ell_0)
  =  {\hat{f}^{\cdot}_n}(t, \kappa^{\ell}(\bar{N}_2^{\ell}(t-)), \kappa^{z}(\bar{N}_2^{z}(t-)), a_0,\ell_0),
  \label{eq:f:n:dependence:grid}
\end{align}
for all
\( (\bar{N}_1^{\ell}(t-), \bar{N}_1^{z}(t-)), (\bar{N}_2^{\ell}(t-),
\bar{N}_2^{z}(t-))\in \mathcal{N}_s\) and fixed \(t, a_0, \ell_0,
j\). This can be implemented via a matrix-based backward recursion,
which produces estimates of all clever covariates, as well as
estimates of the auxiliary and target parameters. We summarize a
practical, reproducible algorithm in Supplementary Appendix G.

\subsection{Estimation of clever weights}
\label{sec:estimation:clever:weights}
  
Estimation of \(w^a_t\) is routine (see also next Section
\ref{sec:estimation:TMLE:step}). We can estimate \(w^{\alpha}_t\) as follows
\begin{align}
  w^{\alpha}_t (\hat{\Lambda}_n^z) (O)
  & = 
    \frac{ \Prodi_{s < t} \big( \alpha
    \hat{\Lambda}_n^z (ds \mid \F_{s-})\big)^{N^z(ds)}
    \big(1- \alpha\hat{\Lambda}_n^z (ds \mid \F_{s-})\big)^{1-N^z(ds)}}{
    \Prodi_{s < t} \big( \hat{\Lambda}_n^z (ds \mid \F_{s-})\big)^{N^z(ds)}
    \big(1- \hat{\Lambda}_n^z (ds \mid \F_{s-})\big)^{1-N^z(ds)}} \notag\\
  & =
  \alpha^{ N^z(t-)}    \exp \bigg( - (\alpha -1) \int_0^t  \hat{\Lambda}_n^z (ds \mid \F_{s-})\bigg). 
    \label{eq:w2:estimation}
\end{align}
{We emphasize the key simplification in the weight
  \( w^{\alpha}_t (\Lambda^z)\) shown above: the ratio of intensities
  reduces to a constant. This is a feature of our intervention
  approach, and contributes to the robustness of the following
  estimation.}

\subsection{Targeting (TMLE) algorithm}
\label{sec:estimation:TMLE:step}

For each intensity \(\Lambda^\cdot\), with corresponding intensity
process \(\lambda^{\cdot}\), to be targeted, we can define the
intercept-only submodel
\begin{align}
  \Lambda_{\eps}^{\cdot} (dt\mid \F_{t-}) =
  \Lambda^{\cdot} (dt\mid \F_{t-}) \exp( \eps), \quad \eps\in\R,
  \label{eq:Lambda:submodel}
\end{align}
and further the log-likelihood loss function
\begin{align*}
&  \mathscr{L}_{\cdot}(\Lambda^{\cdot} ) (O) = \int_0^{\tau} w^a_t (\pi, \Lambda^c) (O) w^{\alpha}_t (\Lambda^z) (O) h^{x,\cdot}_t (\lbrace\Lambda^j\rbrace_{j=1}^J,\Lambda^{\ell}, \Lambda^{z,\alpha})(O) \log \lambda^{\cdot}(t \mid \F_{t-}) N^{\cdot}(dt) \\
& \qquad\qquad\qquad\qquad\qquad\qquad  - \int_0^{\tau}w^a_t (\pi, \Lambda^c) (O) w^{\alpha}_t (\Lambda^z) (O) h^{x,\cdot}_t (\lbrace\Lambda^j\rbrace_{j=1}^J,\Lambda^{\ell}, \Lambda^{z,\alpha})(O) \Lambda^{\cdot} (dt \mid \F_{t-}). 
\end{align*}
It is straightforward that this pair has the desired property that
\begin{align*}
  \frac{d}{d\eps}\bigg\vert_{\eps = 0}  \mathscr{L}_{\cdot}(\Lambda_\eps^{\cdot} ) (O) =
  \int_0^{\tau} w^a_t (\pi, \Lambda^c) (O) w^{\alpha}_t (\Lambda^z) (O) h^{x,\cdot}_t (\lbrace\Lambda^j\rbrace_{j=1}^J,\Lambda^{\ell}, \Lambda^{z,\alpha})(O) \big( N^{\cdot}(dt) -  \Lambda^{\cdot} (dt \mid \F_{t-})\big). 
\end{align*}
Different versions, or combinations, of including
\( w^a_t, w^{\alpha}_t, h^{x,\cdot}_t\) in a weight
or as a covariate are possible as well. \\

Our targeting procedure consists of the following:
\begin{enumerate}
\item Estimators \(\hat{\Lambda}_n^z\), \(\hat{\Lambda}_n^c\) and
  \(\hat{\pi}_n\) for \(\Lambda^z\), \(\Lambda^c\) and \(\pi\), based on which we get
  an estimator for \(w^a_t(\pi,\Lambda^c)\) and  \(w^{\alpha}_t(\Lambda^z)\) as:
  \begin{align*}
    w^a_t (\hat{\pi}_n,\hat{\Lambda}_n^c)(O) &= 
                                               \frac{\delta_a (A) }{\hat{\pi}_n (A\mid L) \prodi_{s < t} (1- \hat{\Lambda}_n^c(ds \mid  \F_{t-}))}   , \\
     w^{\alpha}_t (\hat{\Lambda}_n^z)(O) &= \alpha^{N^z(t-)}    \exp \bigg( - (\alpha -1)\int_0^t  \hat{\Lambda}_n^z (ds \mid \F_{s-})\bigg).
  \end{align*}
\item Initial estimators \(\hat{\Lambda}_{n}^j\) and
  \(\hat{\Lambda}_{n}^{\ell}\) for \(\Lambda^{j}\) and
  \(\Lambda^{\ell}\), \(j=1,\ldots, J\), based on which we also
  estimate clever covariates \( h^{x,j}_t (\Lambda^d,\Lambda^y, \Lambda^{z,\alpha})\),
  \( h^{x,\ell}_t (\Lambda^d,\Lambda^y, \Lambda^{z,\alpha}) \) and
  \( h^{x,z,\alpha}_t (\Lambda^d,\Lambda^y, \Lambda^{z,\alpha}) \).
\item A targeting procedure to update
  \(\hat{\Lambda}_n^1, \ldots, \hat{\Lambda}_n^J\),
  \(\hat{\Lambda}_n^{\ell}\) and \(\hat{\Lambda}_n^{z}\). We propose
  to execute this in an iterative manner starting with
  \(\hat{\Lambda}_{n,0}^1 := \hat{\Lambda}_{n}^1, \ldots,
  \hat{\Lambda}_{n,0}^J := \hat{\Lambda}_{n}^J \),
  \(\hat{\Lambda}_{n,0}^{\ell} := \hat{\Lambda}_{n}^{\ell}\) and
  \(\hat{\Lambda}_{n,0}^{z} := \hat{\Lambda}_{n}^{z}\), and the
  \(m\)th step updating, for each \(x'=1,\ldots,J,\ell,z\),
  \(\hat{\Lambda}_{n,m}^{x'} \mapsto \hat{\Lambda}_{n,m+1}^{x'}\)
  along the parametric submodel \eqref{eq:Lambda:submodel} to solve
  the relevant term corresponding to one of
  \eqref{eq:effi:if:1}--\eqref{eq:effi:if:z} equal to zero for given
  \(w^a_t(\hat{\Lambda}_n^c, \hat{\pi}_n)\),
  \( w^{\alpha}_t (\hat{\Lambda}_{n,m}^z)\), and
  \( h^{x',1}_t (\hat{\Lambda}_{n,m}^{1},\ldots,
  \hat{\Lambda}_{n,m}^{J}, \hat{\Lambda}_{n,m}^{\ell},
  \hat{\Lambda}_{n,m}^{z,\alpha})\) (\( h^{x,z,\alpha}_t \) for
  \(x'=z\)).  This procedure is repeated until
  \(\vert \, \mathbb{P}_n \phi(\hat{P}_n^{*}) \, \vert \le s_n\),
  where
  \(\hat{P}_n^{*}= (\hat{\Lambda}^{1}_{n,m^*}, \ldots,
  \hat{\Lambda}^{J}_{n,m^*},
  \hat{\Lambda}^{\ell}_{n,m^*},\hat{\Lambda}^z_{n,m^*},
  \hat{\Lambda}^c_n, \hat{\pi}_n)\) and
  \(s_n = \sqrt{\mathbb{P}_n (\phi ( \hat{P}_n))^2} / (n^{1/2} \log
  n)\) with \(\mathbb{P}_n ( \phi ( \hat{P}_n))^2\) estimating the
  variance of the efficient influence curve based on the collection of
  initial estimators for the nuisance parameters
  \(\hat{P}_n=(\hat{\Lambda}^{1}_{n}, \ldots, \hat{\Lambda}^{J}_{n},
  \hat{\Lambda}^{\ell}_{n},\hat{\Lambda}^z_n, \hat{\Lambda}^c_n,
  \hat{\pi}_n)\).
\end{enumerate}

\section{Simulation study}
\label{sec:sim:study}

For illustration and proof of concept, this section evaluates
finite-sample estimation behavior and inferential properties in the
setting that mirrors the operation/surgery example considered in
Section~\ref{sec:illustration:application}. We consider estimation and
inference for both $\alpha$-specific parameters, examples of
calibrated/composite estimands obtained by solving for specific
equations regarding the auxiliary risk, and various contrasts of
practical interest. The experiments demonstrate the practical use of
the proposed estimands and TMLE-based estimators, while also
illustrating implications of practical positivity violations,
manifested as extreme clever weights, that arise when interventions
are pushed far beyond the observed data support.  Source code to
reproduce the results of this section is available from GitHub
(\url{https://github.com/helenecharlotte/Web-appendix-calibrated-intensity-interventions}).

Data are simulated as follows.  A baseline covariate is generated as
$L_0 \sim \mathrm{Unif}(0,1)$. There is no baseline
treatment. Conditional on the observed history, each counting process
is generated from a multiplicative intensity model of the form
\begin{align*}
  \lambda^x(t \mid \mathcal{F}_{t-}) 
  =
  \lambda_{\mathrm{baseline}}^x(t)
  \exp\!\left(
    \beta^x_{L_0} L_0
    + \beta^x_{z} N^z(t-)
    + \beta^x_{\ell} N^{\ell}(t-)
  \right),
\end{align*}
where $\mathcal{F}_{t-}$ denotes the observed filtration just prior to
time $t$. The baseline hazards are Weibull,
\( \lambda_{\mathrm{baseline}}^x(t) = \eta^x \nu^x t^{\nu^x - 1}, \)
with parameters $\eta^x>0$ and $\nu^x>0$.  The specific parameter
values used in the simulations are: \(\beta_{L_0}^a = 1\),
\(\beta_{z}^1 = -0.5\), \(\beta_{z}^l = -2.5\),
\(\beta_{\ell}^z = 3\), \(\beta_{\ell}^1 = 2.5\), \(\eta^z = 0.085\),
\(\eta^{\ell} = 0.1\), \(\eta^1 = 0.025\), \(\eta^0 = 0.1\), and
\(\nu^x=1.1\).  These values coincide with those used in the
illustrative example of Section~\ref{sec:illustration:application}. In
this setting, $N^{\ell}$ governs disease progression, $N^{z}$ governs
the operation, $N^{1}$ governs all-cause mortality, and $N^{c}$
governs administrative censoring.

We consider the following classes of target parameters, for which we
assess point estimation, uncertainty quantification, and hypothesis
testing:
\begin{enumerate}
\item $\alpha$-specific auxiliary and target parameters
  $\Psi_z^{\alpha}(P)$ and $\Psi_1^{\alpha}(P)$ over the grid of
  values \(\alpha \in \{0,\,0.1,\,0.25,\,0.5,\,1,\,1.5,\,2.25,\,3\}\).
  For each \(\alpha\), we compute targeted and non-targeted estimators
  based on both correctly specified models and misspecified models for
  \(\lambda^1\) (leaving out the dependence on \(N^{\ell}(t-)\)) and
  \(\lambda^{\ell}\) (leaving out the dependence on \(N^z(t-)\)).  The
  non-targeted estimators are obtained by fitting parametric
  multiplicative intensity models and evaluating the target functional
  at the fitted distribution, and targeted estimators are obtained by
  updating the initial parametric fits through the targeting step.
\item Calibrated parameters \(\alpha^{\theta}(P)\) and corresponding
  composite parameters
  \(\Psi_1^{\theta}(P) = \Psi_1^{\alpha^{\theta}(P)}(P)\) targeting
  specific levels \(\Psi^{\alpha}_z (P) = \theta\) of auxiliary risk,
  for levels
  \(\theta \in \lbrace 0.25, 0.333, 0.5, 0.667,
  0.75\rbrace\). Estimation uses the same TMLE construction described
  above with an additional root-finding step to estimate
  $\alpha^{\theta}(P)$, and an additional TMLE step to target the
  estimator for the composite parameter.
\item Four different contrasts (differences in outcome risks), and
  corresponding power of tests of no effect:
  \begin{enumerate}
  \item \(\Psi^{\alpha=0}_1 (P) \) to \(\Psi^{\alpha=1}_1 (P) \)
    (comparing `censoring for \(z\) events' to observed occurrence of
    \(z\) events).
  \item \(\Psi^{\theta = 0.25}_1 (P) \) to 
    \(\Psi^{\theta =0.75}_1 (P) \) (comparing the outcome risk if 25\%
    had been operated to the outcome risk 75\% been operated).
  \item \(\Psi^{\rho = 0.6}_1 (P) \) to \(\Psi^{\alpha=1}_1 (P) \)
    (comparing the outcome risk if 40\% less had been operated to
    observed occurrence of type \(z\) events).
  \item \(\Psi^{\rho = 1.5}_1 (P) \) to \(\Psi^{\alpha=1}_1 (P) \)
    (comparing the outcome risk if 50\% more had been operated to
    observed occurrence of type \(z\) events).
\end{enumerate}
\end{enumerate}

Inference for all parameters is based on Wald confidence intervals and
Wald tests using the standard error estimated using the empirical
variance of the estimated influence curve. For reference, we also
report `oracle' coverage probabilities obtained by using the Monte
Carlo standard deviation (i.e., the empirical standard deviations
across simulation repetitions).
Each scenario is replicated $M=500$ times with sample size $n=1000$.
No truncation of clever weights is applied; instead we report weight
quantiles to diagnose numerical instability and identify regimes
(values of \(\alpha\)) where some sort of truncation would be
warranted.

Results are summarized in Figures
\ref{fig:operation:simulation:results:alpha}--\ref{fig:operation:simulation:results:contrasts}. Detailed
summaries, including bias, mean squared error, standard errors, and
coverage probabilities for all estimators, are reported in tables in
Supplementary Appendix H.
Figure~\ref{fig:operation:simulation:results:alpha} displays the
estimated $\alpha$-specific outcome and auxiliary target parameters
across the considered grid of \(\alpha\)-values.  The left column
shows results under correct specification and the right column the
misspecified case.  The targeting step is seen to remove the heavy
bias visible in the initial plug-in estimates in the misspecified
scenario as expected from the double robustness properties. The
targeted estimator exhibits negligible bias and near-nominal
confidence interval coverage across the considered values of
\(\alpha\).  For low $\alpha$ (strongly down-weighting operation
occurrence), however, confidence intervals are somewhat
undercovering. This is driven by severe practical positivity issues
resulting in extreme inverse probability weights, as shown in
Figure~\ref{fig:operation:simulation:results:weights} which plots the
empirical distribution of extreme inverse probability weight quantiles
across simulation repetitions.  As expected, weights become more
variable and extreme for values of $\alpha$ far below one, reflecting
intervention regimes where estimation becomes numerically more
challenging.

\begin{figure}[!ht]
\centering
 \begin{center} 
    \includegraphics[width=1\textwidth,angle=0]{./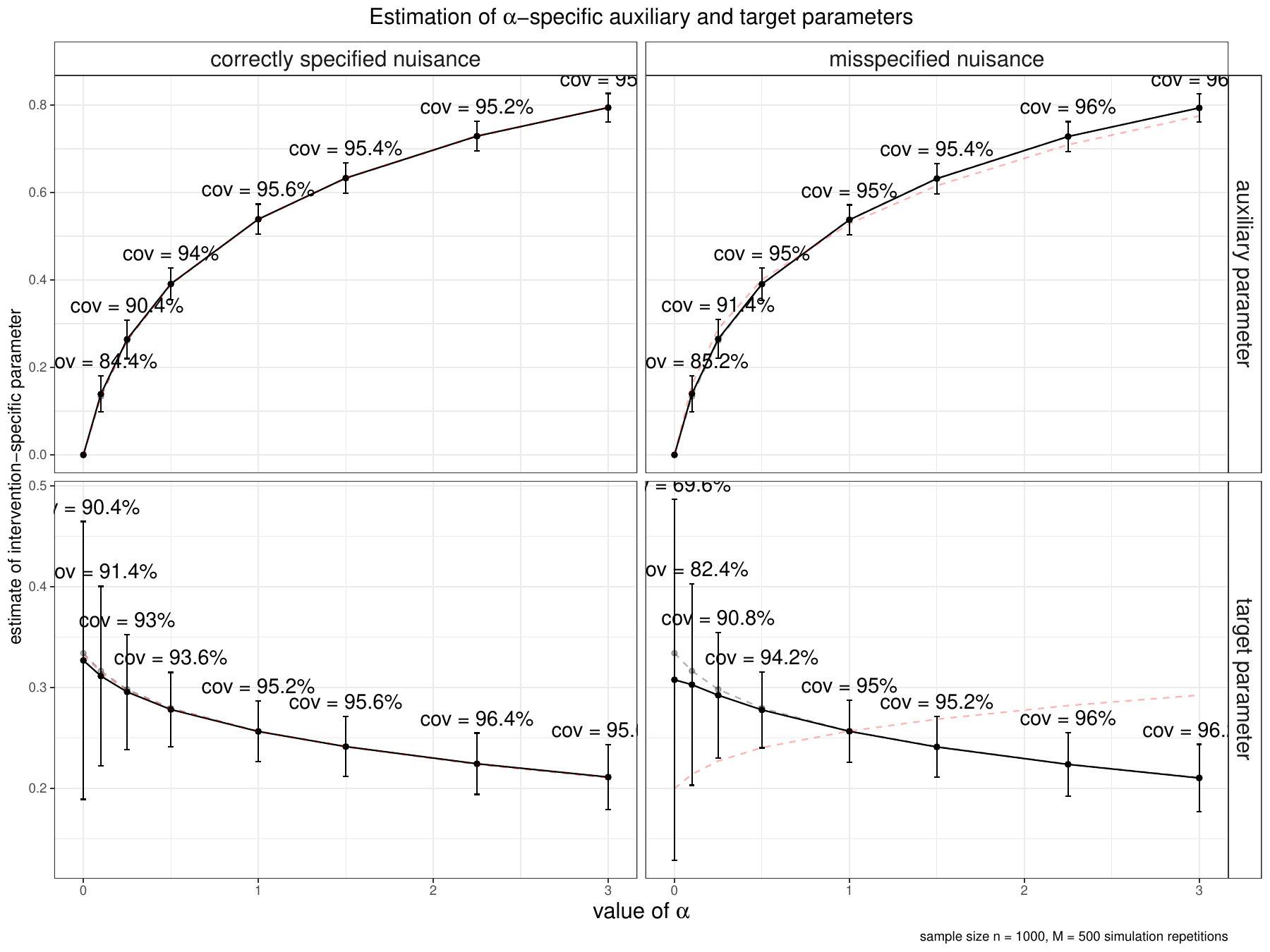}
  \end{center}  
  \vspace{-0.5cm}    
  \caption{Estimated \(\alpha\)-specific auxiliary ($\Psi_z^{\alpha}$,
    upper panels) and target ($\Psi_1^{\alpha}$, lower panels)
    parameters across a grid of values of $\alpha$.  Left column:
    correctly specified nuisance models. Right column: misspecified
    outcome and covariate intensities.  Red dashed lines indicate
    initial plug-in estimates; solid lines are TMLE updates.  For
    small $\alpha$ (strongly down-weighting operation occurrence)
    estimation quality deteriorates because a few individuals receive
    very large inverse probability weights when we attempt to enforce
    an intervention that is unrealistic for them; this produces
    inflated variance and undercoverage of confidence intervals.
  }
   \label{fig:operation:simulation:results:alpha}  
 \end{figure}

 \begin{figure}[!ht]
\centering
 \begin{center}
    \includegraphics[width=1\textwidth,angle=0]{./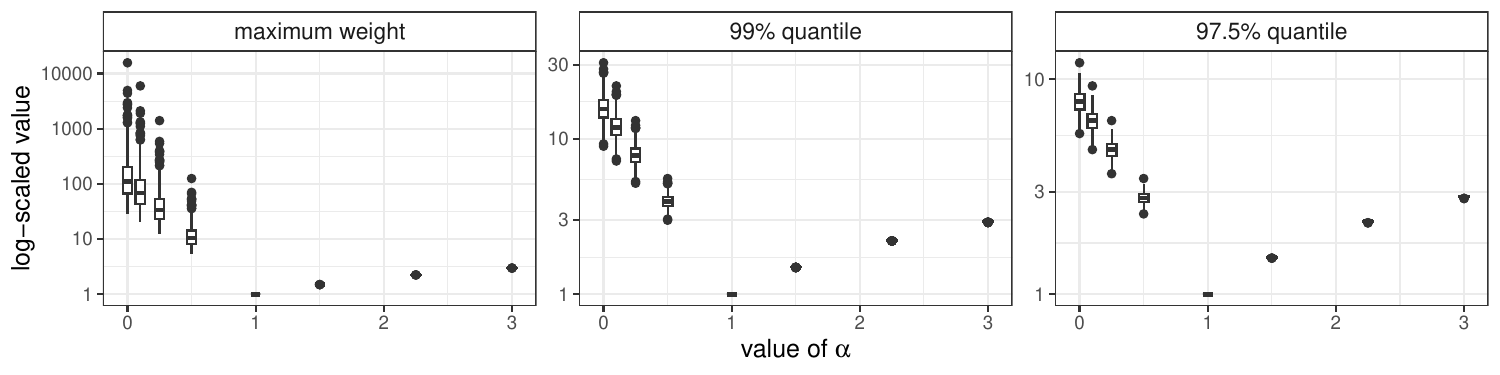}
  \end{center}  
  \vspace{-0.5cm}    
  \caption{Quantiles (97.5\%, 99\%, 100\%) of inverse probability weights across
Monte Carlo replicates for each $\alpha$ (vertical axis on a log
scale).  Extremely large weights appear for small $\alpha$, signaling
severe practical positivity violations in those intervention regimes.  Investigation
shows these extreme weights arise from a small subset of individuals
who, according to their covariate/history profile, are very unlikely
never to undergo the operation; forcing $\alpha$ near zero effectively
requires unlikely counterfactual behaviour for these subjects and produces
unstable estimates.   
These diagnostics could be consulted before interpreting contrasts:
regimes with extreme quantiles indicate estimands that rely on
extrapolation which should be avoided.  }
   \label{fig:operation:simulation:results:weights}  
 \end{figure}

 Figure~\ref{fig:operation:simulation:results:theta} shows the
 estimated calibration parameters $\alpha^{\theta}(P)$ and composite
 parameters $\Psi_1^{\theta}(P)$ for levels
 $\theta\in\{0.25,0.333,0.5,0.667,0.75\}$.  Lower values of $\theta$
 correspond to smaller calibrated $\alpha$ and are harder to estimate
 due to extreme weights. This appears most harmful for the calibration
 parameter itself, which we emphasize is typically not of primary
 inferential interest, while the resulting composite parameter
 estimates remain relatively well-behaved.

\begin{figure}[!ht] 
\centering 
 \begin{center} 
    \includegraphics[width=1\textwidth,angle=0]{./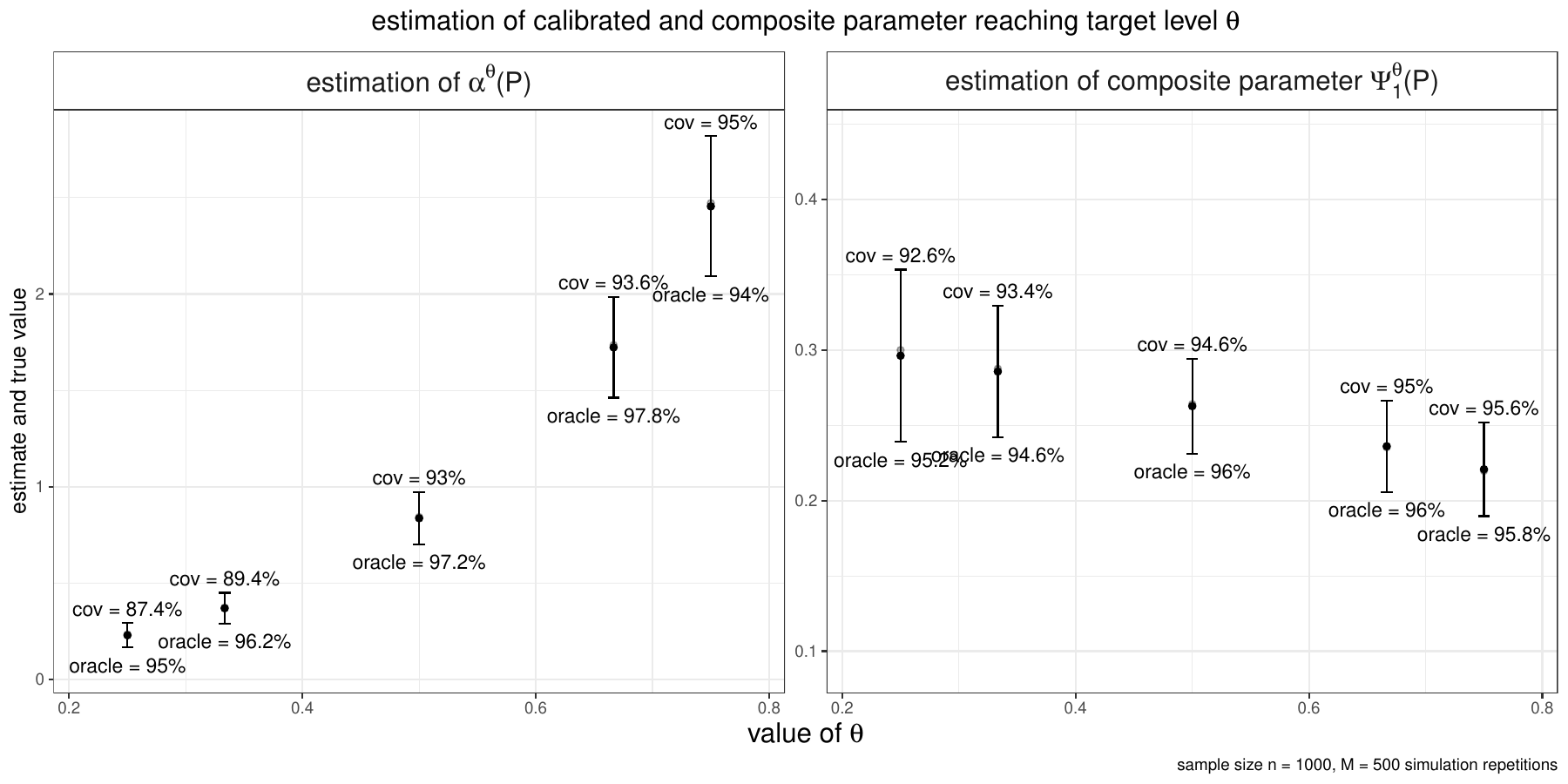}
  \end{center}  
  \vspace{-0.5cm}    
  \caption{Estimates of calibrated (left) and composite (right)
    parameters \(\alpha^{\theta}(P)\) and \(\Psi^{\theta}_1(P)\).
    Lower $\theta$ produce smaller calibrated $\alpha$ and are
    associated with greater numerical instability.}
   \label{fig:operation:simulation:results:theta}  
 \end{figure}

 Figure~\ref{fig:operation:simulation:results:contrasts} displays
 Monte Carlo distributions and empirical rejection rates for the four
 contrasts involving different combinations of \(\alpha\)-specific and
 calibrated parameters.  An important takeaway is that power is not
 determined by effect size alone.  Contrasts involving extreme
 intervention regimes (corresponding to very small values of $\alpha$)
 suffer from severe practical positivity violations and inflated
 variance, which substantially reduces power despite large nominal
 shifts in the intervention.  By contrast, contrasts comparing
 relevant and realistic intervention regimes, those that avoid extreme
 weights and remain close to the observed data support, achieve high
 power at $n=1000$ even when the corresponding effect sizes are more
 moderate.  Overall, these results highlight that well-powered causal
 comparisons are obtained by balancing substantive relevance of the
 intervention with statistical feasibility, rather than by targeting
 extreme intervention regimes.

 \begin{figure}[!ht]
\centering 
 \begin{center}
    \includegraphics[width=0.7\textwidth,angle=0]{./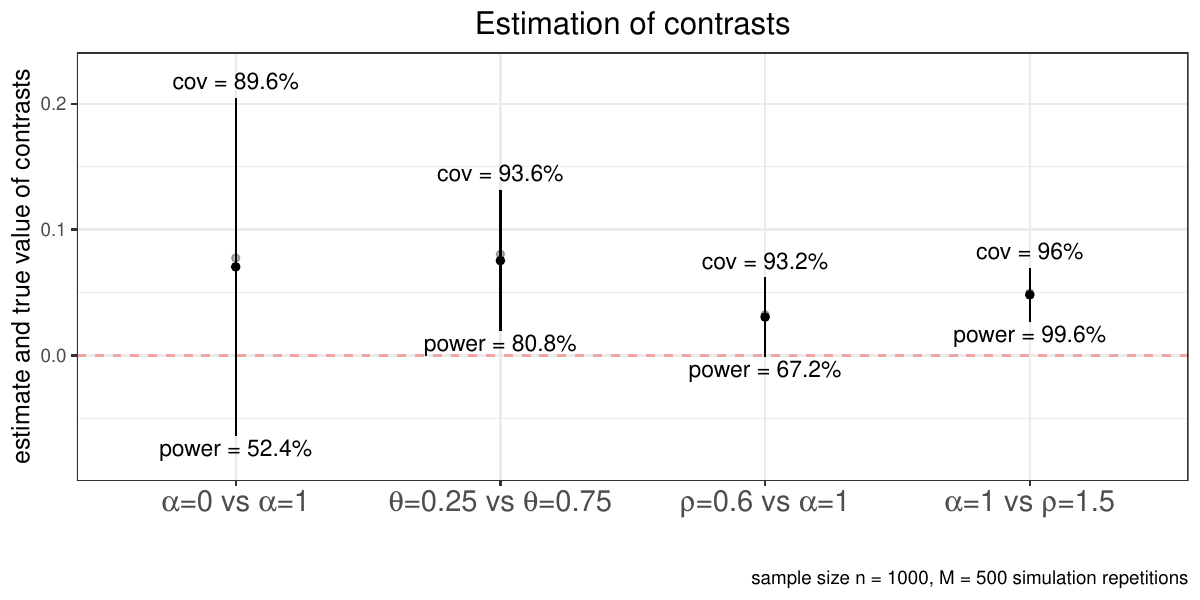}
  \end{center}   
  \vspace{-0.5cm}      
  \caption{Estimates and empirical power for four contrasts: (a) 
    $\Psi^{\alpha=0}_1$ vs.\ $\Psi^{\alpha=1}_1$; (b)
    $\Psi^{\theta=0.25}_1$ vs.\ $\Psi^{\theta=0.75}_1$; (c)
    $\Psi^{\rho=0.6}_1$ vs.\ $\Psi^{\alpha=1}_1$; (d)
    $\Psi^{\rho=1.5}_1$ vs.\ $\Psi^{\alpha=1}_1$.  Distributions
    across Monte Carlo replicates are shown together with empirical
    rejection rates (power) from Wald tests.}
   \label{fig:operation:simulation:results:contrasts}  
 \end{figure}

\section{Discussion}
\label{sec:discussion}

This work presents a general inferential framework for stochastic
\(\alpha\)-scaled intensity interventions and introduces calibrated
interventions that connect abstract intervention parameters to
clinically interpretable quantities.  By scaling intensities rather
than imposing static or deterministic rules, the considered approach
respects natural heterogeneity, avoids unrealistic interventions, and
provides a principled way to evaluate how shifts in intermediate
processes propagate to final outcomes. Calibrating these interventions
to benchmarks such as risk levels or subgroup differences further
strengthens interpretability, defining statistical parameters that are
directly relevant in applied contexts.
Overall, the intervention definition simplifies both positivity
assumptions and the estimation of clever (inverse probability)
weights, mitigating a common source of
instability. 

Our formulation connects closely to incremental propensity score
interventions in discrete time, but differs by operating directly on
the intensity scale and by accommodating general event-history data
with competing risks and right censoring. This makes it more broadly
applicable, but also highlights that it is not simply an extension of
existing frameworks.
Calibrated interventions, which we propose as a novel additional
layer, ensures that the resulting parameters remain closely tied to
subject-matter questions, and allows investigators to tailor analyses
to clinically meaningful benchmarks.

Beyond absolute risk as a calibration benchmark, other clinically
relevant summaries, such as years of healthy life lost due to specific
events \citep{andersen2013decomposition}, could equally serve as
targets. The calibration perspective might also be inverted: instead
of asking what level of mediator reduction corresponds to a given risk
decrease, one may ask how much treatment uptake would need to increase
to achieve a specified reduction in mortality. Additional extensions
include intervening not only on the overall intensity of mediator
events but also on their dependence on specific biomarkers. For
example, if treatment initiation depends strongly on whether a
biomarker crosses a threshold, one could study interventions that
scale this dependence up or down. 

Finally, while we outline a principled TMLE based estimation strategy
that supports flexible nuisance estimation, the generality of the
proposed intervention framework naturally admits a range of
alternative and more specialized implementation choices. Future work
may therefore explore more general or computationally efficient
estimation procedures, particularly in settings with high-dimensional
histories or complex event structures, without altering the underlying
target parameters or inferential framework developed here.

\newpage

\bibliographystyle{abbrvnat}

\section{Competing interests}
No competing interest is declared.



\section{Acknowledgements}

This research is partially funded by the Novo Nordisk Foundation grant
NNF23OC0084961. Additional funding was provided by a philanthropic
gift from Novo Nordisk.

\endgroup 

\end{document}